\documentclass[letterpaper,10pt,conference]{ieeeconf}

\IEEEoverridecommandlockouts  

\usepackage{amsmath}
\usepackage{amssymb}
\usepackage{amsthm}
\usepackage{mathtools}
\usepackage{derivative}
\usepackage{xcolor}
\usepackage{bm}

\usepackage[noadjust]{cite}
\usepackage{url}

\usepackage{graphicx}
\usepackage[export]{adjustbox} 
\graphicspath{{figures/}}

\newtheorem{theorem}{Theorem}
\newtheorem{lemma}{Lemma}
\newtheorem{proposition}{Proposition}
\newtheorem{corollary}{Corollary}

\theoremstyle{definition}
\newtheorem{definition}{Definition}

\newtheorem{remark}{Remark}

\newcommand{\argmin}{\operatornamewithlimits{arg\,min}}

\newcommand{\R}{\mathbb{R}}

\renewcommand{\bf}{\mathbf{f}} 

\usepackage{xcolor}
\definecolor{myblue}{RGB}{49, 114, 174}
\definecolor{myred}{rgb}{0.796, 0.235, 0.2}
\definecolor{mygreen}{rgb}{0.22, 0.596, 0.149}
\definecolor{mypurple}{rgb}{0.584,0.345,0.698}

\usepackage{blindtext}

\usepackage{amsfonts}
\usepackage{hyperref}
\usepackage{comment}

\newcommand{\Cset}{\mathcal{C}}

\title{\textbf{
Stability Margins of CBF-QP Safety Filters: Analysis and Synthesis
}}

\author{Shima Sadat Mousavi, Pol Mestres, and Aaron D. Ames%
\thanks{The authors are with the Department of Mechanical and Civil Engineering,
California Institute of Technology, Pasadena, CA 91125, USA.
\texttt{\{smousavi,mestres,ames\}@caltech.edu}.
This work was supported by The Boeing Company.}%
}

\begin{document}
\maketitle
\thispagestyle{empty}
\pagestyle{empty}


\begin{abstract}
Control barrier function (CBF)-QP safety filters enforce safety by minimally modifying a nominal controller. While prior work has mainly addressed robustness of \emph{safety} under uncertainty, robustness of the resulting closed-loop \emph{stability} is much less understood. This issue is important because once the safety filter becomes active, it modifies the nominal dynamics and can reduce stability margins or even destabilize the system, despite preserving safety. For linear systems with a single affine safety constraint, we show that the active-mode dynamics admit an exact scalar loop representation, leading to a classical robust-control interpretation in terms of gain, phase, and delay margins. This viewpoint yields exact stability-margin characterizations and tractable linear matrix inequality (LMI)-based certificates and synthesis conditions for controllers with certified robustness guarantees. Numerical examples illustrate the proposed analysis and the enlargement of certified stability margins for safety-filtered systems.
\end{abstract}



\section{Introduction}


Safety-critical systems such as autonomous cars, legged robots, aerospace vehicles, and power networks must satisfy state constraints without sacrificing closed-loop performance. Control-theoretic tools for this purpose include control barrier functions (CBFs), model predictive control, Hamilton--Jacobi reachability, and reference governors \cite{ADA-XX-JWG-PT:17,ADA-SC-ME-GN-KS-PT:19,JBR-DQM-MMD:17,SB-MC-SH-CJT:17,EG-SDC-IK:17}. Among these, CBF-based safety filters are especially appealing because they can be wrapped around a pre-existing controller and intervene only when needed, typically through a small online quadratic program \cite{ADA-XX-JWG-PT:17,ADA-SC-ME-GN-KS-PT:19,mousavi2025vertices,mestres2025explicit}. This modular structure has enabled safety-critical control in adaptive cruise control, bipedal walking, and aircraft applications \cite{ADA-JWG-PT:14,SCH-XX-ADA:15,OS-ZS-MM-JG-KR-NR-CF:24,EL-KAW:24}. However, because the filter acts as a supervisory correction layer around a nominal stabilizing controller, implementation error, phase lag, or delay can affect closed-loop stability when the filter becomes active.

A substantial literature has therefore studied robustness of CBFs from the viewpoint of \emph{safety preservation} under uncertainty. Existing results address perturbations to the dynamics, input disturbances and input-to-state safety, measurement uncertainty, sampled-data implementations, and stochastic uncertainty \cite{XX-PT-JWG-ADA:15,KolathayaAmes2019,CosnerSingletaryTaylorMolnarBoumanAmes2021,OrugantiNaghizadehAhmed2024,CosnerCulbertsonTaylorAmes2023}. These works are essential because they show how forward invariance can be maintained despite modeling error, estimation error, and implementation imperfections. However, the main certificate in this literature is robustness of \emph{safety}: it does not quantify, in the classical robust-control sense, how much closed-loop \emph{stability} robustness is lost or preserved once the safety filter modifies the nominal stabilizing controller.

At the same time, a different line of work has shown that CBF-QP safety filters can induce unwanted dynamical effects such as undesired equilibria, limit cycles, and unbounded trajectories \cite{MFR-APA-PT:21,XT-DVD:24,PM-YC-EDA-JC:25-jnls1}. The recent paper~\cite{ChoiTomlinSastrySreenath2025} further highlights that, when treating the CBF as an output, the internal dynamics can lead to instability even when the barrier condition is enforced. For linear systems with an affine safety constraint, the recent work~\cite{PM-SSM-ADA:26} provides simple conditions ensuring global exponential stability of the safety-filtered system. In parallel, classical robust control studies the effect of uncertainty on closed-loop stability through loop representations, frequency-domain tests, and gain, phase, and delay margins \cite{ZhouDoyleGlover1996,BoydElGhaouiFeronBalakrishnan1994,EL-KAW:24}. What remains missing is a bridge between these viewpoints: a framework for certified robustness of the \emph{stability} of safety-filtered dynamics, rather than only robustness of their safety guarantees.

The CBF-QP safety filter can be viewed as a supervisory correction layer wrapped around a nominal controller. In this architecture, the nominal controller defines the baseline closed loop, while the safety filter adds a corrective action only when the barrier constraint is active. This motivates modeling uncertainty in the correction channel of the safety filter, since such perturbations affect the closed loop only during safety intervention. Although robustness of the full safety-filtered system is, in principle, determined by both the inactive and active modes, the inactive mode is exactly the nominal closed loop, and we assume that the baseline controller has already been designed to provide the desired robustness margin there. Accordingly, the relevant robustness degradation arises when the safety filter intervenes, and the active mode becomes the focus of the present analysis.

This paper develops such a framework for linear systems with an affine safety constraint. First, building on \cite{PM-SSM-ADA:26}, we show that the active-mode dynamics admit an exact scalar loop representation around the nominal closed loop (Prop.~\ref{prop:loop_transfer}). This leads to an exact frequency-domain characterization of active-mode robustness under scalar loop perturbations (Thm.~\ref{thm:active_loop_robustness}), including gain, phase, and delay margin interpretations and a direct safety observation for correction-channel gain uncertainty (Cor.~\ref{cor:gain_interval_freq} and Prop.~\ref{prop:safety_kappa}). Second, we derive tractable linear matrix inequality (LMI)-based certificates for certified gain, phase, and delay margins (Prop.~\ref{prop:lmi_gain_interval}, Prop.~\ref{prop:lmi_phase_bound}, and Cor.~\ref{cor:delay_from_phase}). Third, we provide convex synthesis conditions for selecting the nominal feedback gain to achieve a prescribed certified gain margin (Lem.~\ref{lem:Atilde_affine_in_K} and Prop.~\ref{prop:gain_interval_synthesis}). 

\section{Problem Setup and Safety-Filtered Dynamics}\label{sec:setup}

We consider linear systems with a single affine safety constraint and the associated CBF--QP safety filter. We first recall the closed-form controller and the induced two-mode piecewise-affine dynamics from \cite{PM-SSM-ADA:26}, which form the basis of the robustness analysis developed here.

\subsection{System and Safety Filter}

Consider the linear system
\begin{equation*}
\dot x = A x + B u ,
\label{eq:sys}
\end{equation*}
where $x\in\R^n$ and $u\in\R^m$. Let
\(
u_{\rm nom}(x) = -Kx
\)
be a nominal state-feedback controller, and define $A_0 := A-BK$.
Throughout the paper, we assume that $A_0$ is Hurwitz.
Safety is specified by the affine constraint $h(x) := c^\top x + d \ge 0$,
with associated safe set 
\begin{equation}
    \Cset := \{x\in\R^n : h(x)\ge 0\}.
    \label{eq:safe_set}
    \end{equation}
    We assume $d>0$,
so that the origin lies in the interior of $\Cset$,  and
\(c^\top B\neq 0\), so that the input enters the barrier derivative.
To enforce safety, we use the exponential CBF condition
\begin{equation}
c^\top(Ax+Bu) + \alpha h(x) \ge 0 ,
\label{eq:cbf}
\end{equation}
where $\alpha>0$. The safety-filtered input is defined by the quadratic program
\begin{equation}
\begin{aligned}
u^\star(x)=\argmin_{u\in\R^m}\;&
\frac{1}{2}(u+Kx)^\top G (u+Kx) \\
\text{s.t.}\;&
c^\top(Ax+Bu)+\alpha h(x)\ge 0 ,
\end{aligned}
\label{eq:qp}
\end{equation}
where $G=G^\top\succ 0$.


For every \(x\), \eqref{eq:qp} admits a unique optimizer, since the objective
is strictly convex and the constraint is affine. It therefore returns the
closest safe input to \(u_{\rm nom}(x)\) in the \(G\)-weighted norm. We next
recall the resulting closed-form controller and associated two-mode dynamics
from \cite{PM-SSM-ADA:26}.

\subsection{Closed-Form Controller and Two-Mode Dynamics}

We next recall from \cite{PM-SSM-ADA:26} the explicit CBF--QP solution and the induced two-mode closed-loop dynamics. The resulting safety-filtered closed loop is continuous and piecewise affine, with an active mode that differs from the nominal mode by a rank-one term; this structure is central to the robustness analysis below. Define \(a:=B^\top c\), \(\beta:=a^\top G^{-1}a\), and \(g(x):=c^\top (A_0+\alpha I)x+\alpha d\). The function \(g(x)\) determines whether the nominal input satisfies the CBF constraint. Accordingly, define the inactive and active regions
%
\begin{equation}
\mathcal{R}_+ := \{x\in\R^n : g(x)\ge 0\},
\qquad
\mathcal{R}_- := \R^n\setminus\mathcal{R}_{+}.
\label{eq:regions}
\end{equation}

Define
\begin{equation}
U := \frac{B G^{-1} a}{\beta},
\qquad
V^\top := c^\top (A_0+\alpha I),
\label{eq:UV}
\end{equation}
and $\tilde{A} = A_0 - UV^\top$, $\tilde{b} = -\alpha d U$.

The next result recalls the explicit form of the safety filter and the induced
two-mode closed loop. The key point is that, in the active region, the safety
filter adds a correction in a fixed direction, which leads to an affine mode
whose linear part is a rank-one modification of the nominal closed loop.

\begin{proposition}[{\cite{PM-SSM-ADA:26}}]
\label{prop:twomode}
The optimizer of \eqref{eq:qp} is
\begin{equation}
u^\star(x)=
\begin{cases}
-Kx, & x\in \mathcal{R}_+, \\[1mm]
-Kx-\dfrac{g(x)}{\beta}\,G^{-1}a, & x\in \mathcal{R}_- ,
\end{cases}
\label{eq:ustar}
\end{equation}
and the corresponding closed-loop dynamics are
\begin{equation}
\dot x =
\begin{cases}
A_0 x, & x\in \mathcal{R}_+, \\[1mm]
\tilde A x + \tilde b, & x\in \mathcal{R}_- .
\end{cases}
\label{eq:twomode}
\end{equation}
\end{proposition}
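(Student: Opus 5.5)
The argument combines the KKT conditions of \eqref{eq:qp} with direct substitution into the dynamics. First I will rewrite the constraint in terms of the deviation $v:=u+Kx$ from the nominal input. Using $Bu = Bv - BKx$ gives
\[
c^\top(Ax+Bu)+\alpha h(x) = c^\top A_0 x + \alpha c^\top x + \alpha d + c^\top B v = g(x) + a^\top v .
\]
The QP is therefore equivalent to minimizing $\tfrac12 v^\top G v$ subject to the single affine constraint $a^\top v + g(x)\ge 0$. This is the $G$-weighted projection of the origin onto a half-space. Since $c^\top B\neq 0$, we have $a\neq 0$ and hence $\beta>0$.

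Next I will solve this projection problem via KKT. The objective is strictly convex and the constraint is affine, so Slater's condition holds and KKT is necessary and sufficient. The conditions are $Gv = \lambda a$, $\lambda\ge 0$, $a^\top v + g(x)\ge 0$, and $\lambda(a^\top v + g(x))=0$. There are two cases.

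\textbf{Case $g(x)\ge 0$.} Here $v=0$ with $\lambda=0$ is feasible and satisfies all conditions, so $u^\star=-Kx$.

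\textbf{Case $g(x)<0$.} Now $v=0$ is infeasible, so $\lambda>0$ and the constraint is active. Substituting $v=\lambda G^{-1}a$ into $a^\top v = -g(x)$ gives $\lambda\beta=-g(x)$, hence $\lambda=-g(x)/\beta>0$. Therefore $v=-\tfrac{g(x)}{\beta}G^{-1}a$, which yields \eqref{eq:ustar}. Uniqueness follows from strict convexity.

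Finally I will substitute the optimizer into $\dot x = Ax+Bu^\star$. On $\mathcal{R}_+$ this gives $A_0x$ directly. On $\mathcal{R}_-$ it gives
\[
\dot x = A_0x - \tfrac{g(x)}{\beta}BG^{-1}a = A_0x - U g(x).
\]
Expanding $g(x)=V^\top x+\alpha d$ using \eqref{eq:UV} gives $(A_0-UV^\top)x - \alpha d\,U = \tilde A x+\tilde b$, which is \eqref{eq:twomode}.

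No step is a genuine obstacle. The only points needing care are the sign bookkeeping in the change of variables and confirming that $\beta>0$, which follows from $G^{-1}\succ 0$ and $a\neq 0$.
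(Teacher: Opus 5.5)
Your proof is correct: the change of variables $v=u+Kx$, the KKT case split on the sign of $g(x)$ (with $\beta>0$ from $a\neq 0$), and the substitution $g(x)=V^\top x+\alpha d$ giving $\tilde A=A_0-UV^\top$ and $\tilde b=-\alpha d\,U$ all check out. The paper does not reprove this result but recalls it from its cited prior work. Its surrounding remarks describe the optimizer as unique by strict convexity with an affine constraint, i.e., the $G$-weighted projection of $u_{\rm nom}(x)$ onto the safe half-space, which is exactly the standard argument you give.
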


\begin{remark}
\label{rem:corr_channel}
In the active region \(\mathcal{R}_-\) in \eqref{eq:regions}, the safety-filtered input can be written as
\begin{equation}
u^\star(x)=u_{\rm nom}(x)+u_{\rm corr}(x),
\qquad
u_{\rm nom}(x)=-Kx,
\label{eq:nom_corr_split}
\end{equation}
with correction term
\begin{equation}
u_{\rm corr}(x):=-\frac{g(x)}{\beta}\,G^{-1}a.
\label{eq:u_corr}
\end{equation}
Thus, in the active region, the safety filter modifies the nominal controller through the additive correction term \(u_{\rm corr}(x)\).
\end{remark}


Equation \eqref{eq:twomode} shows that the safety-filtered closed loop is a continuous two-mode piecewise-affine system. Its active-mode matrix \(\tilde A\) differs from the nominal closed-loop matrix \(A_0\) by the rank-one term \(-UV^\top\), a structure underlying the scalar loop representation and robustness analysis developed next.
We also recall the following stability result from \cite{PM-SSM-ADA:26}, which will be used later to translate mode-wise stability into stability of the full closed loop~\eqref{eq:twomode}.

\begin{proposition}[{\cite{PM-SSM-ADA:26}}]
\label{prop:ges_recall}
If \(\tilde A\) is Hurwitz, then the origin is globally exponentially stable
(GES) for \eqref{eq:twomode}.
\end{proposition}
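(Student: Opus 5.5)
Here $A_0$ is Hurwitz by standing assumption and $\tilde A$ is Hurwitz by hypothesis. The plan is to build a common quadratic Lyapunov function for the two modes \eqref{eq:twomode}, exploiting continuity of the vector field across the switching surface $\{g(x)=0\}$. Because the active mode is affine, its equilibrium need not be the origin. So the first step is to rewrite the dynamics as a linear system with a sector-type nonlinearity.

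\textbf{Step 1 (Lur'e form).} Write $\dot x = A_0 x - U\,\phi(x)$ with $\phi(x) := \min\{g(x),0\}$. Indeed, $g(x)=V^\top x + \alpha d$ and $\tilde A x + \tilde b = A_0 x - U g(x)$, so this reproduces both modes of \eqref{eq:twomode}. We then use the fact that the origin lies strictly inside $\mathcal{R}_+$, since $g(0)=\alpha d>0$. For $x\in\mathcal{R}_-$ we have $V^\top x < -\alpha d<0$. Hence $\phi(x) = \sigma(x)\,V^\top x$ with
\[
\sigma(x) = \frac{g(x)}{V^\top x} \in (0,1] \quad \text{on } \mathcal{R}_-, \qquad \sigma(x)=0 \quad \text{on } \mathcal{R}_+ .
\]
Thus $\dot x = (A_0 - \sigma(x) UV^\top)x$ with $\sigma(x)\in[0,1]$. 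In words, the closed loop is a state-dependent convex combination of $A_0$ and $\tilde A$ acting linearly on $x$.

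\textbf{Step 2 (Lyapunov certificate).} It now suffices to find $P\succ0$ with $(A_0-sUV^\top)^\top P + P(A_0-sUV^\top)\prec 0$ for all $s\in[0,1]$. By convexity in $s$, this reduces to a common $P$ for the two endpoints $A_0$ and $\tilde A$. This is the main obstacle: two Hurwitz matrices need not share a quadratic Lyapunov function in general. The rank-one structure is what I would exploit. Writing $\tilde A = A_0 - UV^\top$, the two matrices differ by rank one. By the known result on rank-one-difference pairs (Shorten--Narendra), a common quadratic Lyapunov function exists iff $A_0\tilde A$ has no negative real eigenvalues, equivalently iff $A_0 - s UV^\top$ is Hurwitz for all $s\in[0,1]$.

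\textbf{Step 3 (verify the segment condition).} Let $H(\lambda) := V^\top(\lambda I - A_0)^{-1}U$. Then
\[
\det(\lambda I - A_0 + sUV^\top) = \det(\lambda I - A_0)\,\bigl(1 + sH(\lambda)\bigr).
\]
For the segment to stay Hurwitz, a root crossing the imaginary axis at some $s\in(0,1)$ must satisfy $H(j\omega) = -1/s \le -1$. I would try to exclude this using the specific form $U = BG^{-1}a/\beta$ and $V^\top = c^\top(A_0+\alpha I)$, which give $c^\top U = 1$. This yields
\[
H(\lambda) = c^\top(A_0+\alpha I)(\lambda I-A_0)^{-1}U = -1 + (\lambda+\alpha)\,c^\top(\lambda I - A_0)^{-1}U .
\]
If this direct route stalls, I would fall back to the approach of \cite{PM-SSM-ADA:26}: decompose the state along $c$ and use that in the active mode $\dot h = -\alpha h$. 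This makes $h$ converge exponentially to $0$, which forces trajectories back into $\mathcal{R}_+$ or onto $\{h=0\}$, where the dynamics reduce to a zero-dynamics matrix. The eigenvalues of $\tilde A$ consist of $-\alpha$ together with the eigenvalues of that zero-dynamics matrix. Exponential convergence then follows piecewise, using continuity of the flow and a finite-switching argument: once $h$ is small, the trajectory stays in $\mathcal{R}_+$ near the origin and $A_0$ being Hurwitz finishes the proof.
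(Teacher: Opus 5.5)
Your Step~1 is correct, but the main route fails at Step~2. You quote the Shorten--Narendra test correctly in its first form: for Hurwitz $A_0$ and $\tilde A=A_0-UV^\top$, a common quadratic Lyapunov function exists iff $A_0\tilde A$ has no negative real eigenvalue. That test is \emph{not} equivalent to the segment $A_0-sUV^\top$, $s\in[0,1]$, being Hurwitz. For $\omega>0$ one has $\det(A_0\tilde A+\omega^2I)=\det(A_0^2+\omega^2I)\,\bigl(1+\Re H(j\omega)\bigr)$. So the test is really the circle-criterion/SPR inequality $1+\Re H(j\omega)=\Re\bigl[(j\omega+\alpha)\,c^\top(j\omega I-A_0)^{-1}U\bigr]>0$ for all $\omega$. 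The segment condition only requires the Nyquist plot of $H$ to miss $(-\infty,-1]$, which is much weaker.

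The hypotheses do not give the SPR inequality. Take $A=\begin{bmatrix}0&1\\0&0\end{bmatrix}$, $B=[0\ \,1]^\top$, $c=[10\ \,1]^\top$, $\alpha=10$, $G=1$, $K=[1\ \,0.1]$ and any $d>0$. Then $U=B$, $A_0=\begin{bmatrix}0&1\\-1&-0.1\end{bmatrix}$ and $\tilde A=\begin{bmatrix}0&1\\-100&-20\end{bmatrix}$. Both are Hurwitz, and so is the whole segment, since its characteristic polynomial has positive coefficients. Yet $A_0\tilde A=\begin{bmatrix}-100&-20\\10&1\end{bmatrix}$ has eigenvalues $\approx-1.02$ and $\approx-97.98$, so no common quadratic Lyapunov function exists.

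So the certificate you seek in Step~2 is unavailable in general, and Step~3 checks the wrong condition; that condition can itself fail for $n\ge3$. The root cause is that Step~1 keeps only the sector information $\sigma(x)\in[0,1]$. A sector bound with Hurwitz endpoints is an Aizerman-type hypothesis, and that does not imply stability.

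Your fallback is only a sketch, and its decisive step is wrong. Two parts are right. First, $c^\top\tilde A=-\alpha c^\top$, so $\dot h=-\alpha h$ on $\mathcal R_-$. Second, the active-mode equilibrium $x^\star=-\tilde A^{-1}\tilde b$ satisfies $h(x^\star)=0$ and $g(x^\star)=\alpha d/(1+H(0))>0$, so it lies in the interior of $\mathcal R_+$ and every trajectory leaves $\mathcal R_-$ in finite time.

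The claim ``once $h$ is small, the trajectory stays in $\mathcal R_+$ near the origin'' confuses small $h$ with small $x$. Since $h(0)=d>0$, small $h$ means being near $\partial\mathcal{C}$ (for example near $x^\star$), not near the origin. Nothing in your argument stops the $A_0$-flow from carrying the state back into $\mathcal R_-$ repeatedly. GES also needs a bound $Me^{-\lambda t}\|x_0\|$ that is uniform in $x_0$.

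Rescaling $x\mapsto x/\rho$ and letting $\rho\to\infty$ shows that GES of \eqref{eq:twomode} implies GES of the conewise-linear system $\dot x=A_0x-U\min\{V^\top x,0\}$. Your ``finite-switching argument'' would therefore have to prove stability of a continuous bimodal piecewise-linear system from Hurwitz-ness of its two modes alone. That is the whole difficulty, not a formality. For $n=3$ such systems can be unstable (Carmona, Freire, Ponce, Torres, ``The continuous matching of two stable linear systems can be unstable'').

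The CBF structure does not rule this out. Take any bimodal pair $A_1$, $A_2=A_1-uv^\top$ in which $A_2$ has a real eigenvalue not shared by $A_1$. It arises as exactly this conewise limit with $A=A_1$, $K=0$, $G=1$, $B=u/(c^\top u)$, where $c$ is a left eigenvector of $A_2$ signed so that $c^\top u>0$.

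The present paper gives no proof and only cites \cite{PM-SSM-ADA:26}. Before trying to close this step at the stated generality, you should check exactly which hypotheses that reference uses.
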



The goal of the paper is to characterize robustness margins of the active-mode loop induced by the safety filter and to derive design conditions on \(K\) that enlarge these margins while preserving stability of the closed loop~\eqref{eq:twomode}. Although robustness of \eqref{eq:twomode} depends, in principle, on both modes, the inactive mode is exactly the nominal closed loop \(A_0\), whose robustness margin is assumed satisfactory by design. Since the safety filter modifies the dynamics only through the correction term \(u_{\rm corr}(x)\) when active, the relevant robustness degradation arises in the active mode.

\section{Active-Mode Robustness}
\label{sec:active_mode_robustness}

This section develops the robustness viewpoint of the paper. Since the safety
filter modifies the dynamics only in the active region, we focus on the active
mode of \eqref{eq:twomode}. We show that its rank-one structure admits an
equivalent scalar loop representation around the nominal closed loop. This
reduction allows gain, phase, and delay margins to be characterized using classical loop-based tools.

\subsection{Scalar Loop Representation of the Active Mode}
\label{subsec:loop_representation}

We begin by showing that the active mode admits an exact scalar feedback
representation around the nominal closed loop. The key idea is that the
rank-one correction induced by the safety filter acts through a single input
direction and a single output channel. This observation is the basis for the
robustness analysis developed in the remainder of the section.

Consider the active-mode linear part
\begin{equation}
\dot z = \tilde A z,
\label{eq:active_linear}
\end{equation}
where $z$ denotes the state in coordinates centered at the active-mode
equilibrium whenever the latter exists.

The next result shows that the active mode admits an exact scalar loop representation around the nominal closed loop.

\begin{proposition}
\label{prop:loop_transfer}
The active mode \(\dot z=\tilde A z\) can be realized as the closed loop of a scalar feedback interconnection whose open-loop transfer function is
\begin{equation}
L(s)=V^\top (sI-A_0)^{-1}U.
\label{eq:return_ratio}
\end{equation}
\end{proposition}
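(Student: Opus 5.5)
The plan is to build the interconnection explicitly and check that eliminating the loop signals gives back $\dot z=\tilde A z$. Introduce the plant
\begin{equation*}
\dot z = A_0 z + U w, \qquad y = V^\top z,
\end{equation*}
with scalar input $w$ and scalar output $y$. This is the nominal closed loop driven through the correction direction $U$ and observed through $V^\top$. Close it with the unity negative feedback $w=-y$, so the loop is closed through the scalar channel. Substituting gives $\dot z = A_0 z - UV^\top z = \tilde A z$, which is exactly the active mode, by the definition $\tilde A = A_0 - UV^\top$ in \eqref{eq:UV}. The realization is therefore exact at the state-space level, with no approximation involved.

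Next, compute the open-loop transfer function. Break the loop at $w$ and take Laplace transforms with zero initial condition. This gives $Z(s)=(sI-A_0)^{-1}U\,W(s)$ and $Y(s)=V^\top(sI-A_0)^{-1}U\,W(s)$. The map $w\mapsto y$ is therefore $L(s)=V^\top(sI-A_0)^{-1}U$, which is \eqref{eq:return_ratio}. Because $A_0$ is Hurwitz, $L$ is a stable, strictly proper scalar transfer function. This matters later, since it means that Nyquist-type arguments for the loop $1+L(s)$ involve no open-loop unstable poles.

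I would also record the consistency identity $\det(sI-\tilde A)=\det(sI-A_0)\,(1+L(s))$. It follows from the matrix determinant lemma applied to $sI-A_0+UV^\top$. The identity shows that the closed-loop characteristic polynomial is the one associated with the return difference $1+L(s)$, which confirms that the scalar loop captures the eigenvalues of the active mode exactly.

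It remains to connect the model to the actual dynamics. Interpret the loop signal as the correction channel: by \eqref{eq:u_corr}, $Bu_{\rm corr}(x) = -U g(x)$, and $g(x)=V^\top x+\alpha d$. In coordinates centered at the active-mode equilibrium, when it exists, the affine term $\tilde b$ is absorbed and $Bu_{\rm corr}$ reduces to $-UV^\top z$. The feedback $w=-y$ is therefore literally the safety correction. I expect this step to be the main obstacle: one has to be careful about the affine offset and the change of coordinates, because the equilibrium may not exist if $\tilde A$ is singular. The statement concerns only the linear part $\dot z=\tilde Az$, so the linear realization above suffices regardless, and I would note the interpretation only as a remark.
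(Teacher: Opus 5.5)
Your proposal is correct and is essentially the paper's proof: the same auxiliary SISO system $\dot z=A_0z+Uw$, $y=V^\top z$, closed by $w=-y$ to recover $\tilde A=A_0-UV^\top$, with $L(s)$ read off as the $w\mapsto y$ transfer function; the determinant identity you record is what the paper uses next, in Thm.~\ref{thm:active_loop_robustness}. One small caution on your closing remark, which the proof does not need: in coordinates $z=x-x^\ast$, only the \emph{deviation} of $Bu_{\rm corr}$ from its equilibrium value equals $-UV^\top z$, because $A_0x^\ast=Ug(x^\ast)$ with $g(x^\ast)\neq 0$ (since $A_0$ is invertible and $g(0)=\alpha d>0$), so the constant $-Ug(x^\ast)$ does not vanish but cancels against the nominal term.
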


\begin{proof}
Consider the auxiliary SISO system
\begin{equation}
\dot z = A_0 z + Uw, \qquad y = V^\top z.
\label{eq:aux_iosys_proof}
\end{equation}
Its transfer function from the scalar input \(w\) to the scalar output \(y\) is written as \eqref{eq:return_ratio}.
Now impose the static negative-feedback interconnection
\(
w=-y.
\)
Since \(y=V^\top z\), this gives \(w=-V^\top z\), and therefore
\(
\dot z
=
A_0 z + U(-V^\top z)
=
(A_0-UV^\top)z.
\)
Recalling from that
\(
\tilde A = A_0-UV^\top,
\)
we conclude that the resulting closed-loop dynamics are
\(
\dot z=\tilde A z,
\)
which is exactly the active-mode linear dynamics.
\end{proof}

Prop.~\ref{prop:loop_transfer} shows that the active mode can be realized
as a scalar feedback loop around the nominal closed loop. 
Inspired by this, we model
uncertainty directly in this feedback loop by considering the perturbed feedback interconnection
\begin{equation}
\dot z = A_0 z + Uw, \qquad
y = V^\top z, \qquad
w = -\Delta y,
\label{eq:active_loop_delta}
\end{equation}
where \(\Delta\in\mathbb{C}\) denotes a scalar loop perturbation. The corresponding
perturbed active-mode matrix is
\begin{equation}
\tilde A(\Delta):=A_0-\Delta UV^\top.
\label{eq:AtildeDelta}
\end{equation}
This model captures, at an abstract level, uncertainty affecting the
additional feedback action induced by the safety filter, including
correction-channel gain error, approximation error, unmodeled dynamics, and delay.

A particularly relevant structured case is multiplicative uncertainty in the safety-correction term. The motivation is that the CBF-QP safety filter acts as a supervisory layer around an already designed nominal controller: it does not redesign the baseline loop, but modifies it only through the additive correction term \(u_{\rm corr}(x)\) when the barrier constraint is active. Thus, uncertainty associated specifically with the safety layer is naturally modeled in that correction channel. A multiplicative perturbation is especially natural when the correction is computed in the correct direction but applied with uncertain effective gain, for instance due to gain mismatch, limited control authority, actuator scaling, blending with other channels, or approximate QP implementation. Moreover, since Prop.~\ref{prop:loop_transfer} shows that the active mode admits an exact scalar feedback realization, this perturbation is also the natural structured uncertainty model for classical gain, phase, and delay margin analysis. Accordingly, using \eqref{eq:nom_corr_split}--\eqref{eq:u_corr}, we replace \(u_{\rm corr}\) with \(\Delta u_{\rm corr}\). In the active region, this yields \eqref{eq:AtildeDelta}. Thus, the loop perturbation model above exactly captures multiplicative uncertainty in the correction channel.

We now state the main result of the paper characterizing robustness of the active mode dynamics.
The key point is that
stability of the perturbed active mode is exactly equivalent to a scalar
closed-loop condition. This converts active-mode robustness into a classical
loop-stability question.

\begin{theorem}
\label{thm:active_loop_robustness}
The matrix \(\tilde A(\Delta)\) in \eqref{eq:AtildeDelta} is Hurwitz if and only if the scalar closed-loop equation
\begin{equation}
1+\Delta L(s)=0,
\label{eq:scalar_loop_general}
\end{equation}
with \(L(s)\) defined in \eqref{eq:return_ratio}, has no solutions in \(\{\Re s\ge 0\}\). In that case, the origin is GES for the closed loop \eqref{eq:twomode}.
\end{theorem}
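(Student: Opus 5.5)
The plan is to reduce the Hurwitz property of \(\tilde A(\Delta)\) to a statement about the characteristic polynomial, and to factor that polynomial with the matrix determinant lemma, so that the nominal closed-loop part and the scalar loop part separate. Since \(\tilde A(\Delta)=A_0-\Delta UV^\top\) is a rank-one perturbation of \(A_0\), for every \(s\) that is not an eigenvalue of \(A_0\) we have
\begin{equation*}
\det\bigl(sI-\tilde A(\Delta)\bigr)=\det\bigl(sI-A_0+\Delta UV^\top\bigr)=\det(sI-A_0)\,\bigl(1+\Delta V^\top(sI-A_0)^{-1}U\bigr)=\det(sI-A_0)\,\bigl(1+\Delta L(s)\bigr),
\end{equation*}
with \(L\) as in \eqref{eq:return_ratio} from Prop.~\ref{prop:loop_transfer}. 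This identity comes from \(\det(I+\Delta(sI-A_0)^{-1}UV^\top)=1+\Delta V^\top(sI-A_0)^{-1}U\), i.e., Sylvester's determinant identity applied to a rank-one matrix.

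Next I use the standing assumption that \(A_0\) is Hurwitz. Then \(\det(sI-A_0)\neq 0\) on the closed right half-plane \(\{\Re s\ge 0\}\), so \(L(s)\) is well defined (analytic) there, and there are no pole--zero cancellation issues in that region. Consequently, for \(\Re s\ge 0\), \(s\) is an eigenvalue of \(\tilde A(\Delta)\) if and only if \(1+\Delta L(s)=0\). Hence \(\tilde A(\Delta)\) has no eigenvalue in \(\{\Re s\ge0\}\), i.e., is Hurwitz, exactly when \eqref{eq:scalar_loop_general} has no solution there. Both directions follow from this single equivalence. Any cancellations between \(\det(sI-A_0)\) and \(1+\Delta L(s)\) can only occur at eigenvalues of \(A_0\), which lie in the open left half-plane, so they do not affect the argument.

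For the GES claim, I would observe that replacing \(u_{\rm corr}\) by \(\Delta u_{\rm corr}\) (for real \(\Delta\), the physically meaningful case) yields again a continuous two-mode piecewise-affine system. Its inactive mode is \(A_0\), and its active mode is \(\tilde A(\Delta)x+\Delta\tilde b\), which is of exactly the same structure as \eqref{eq:twomode} with \(U\) replaced by \(\Delta U\). In particular, for \(\Delta=1\) this is \eqref{eq:twomode} itself. Applying Prop.~\ref{prop:ges_recall} with \(\tilde A(\Delta)\) Hurwitz then gives GES of the origin.

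The main obstacle is this last step. Prop.~\ref{prop:ges_recall} is stated for the specific rank-one term arising from the QP, so I must check that its proof uses only the structure (continuity across \(\{g=0\}\), \(A_0\) Hurwitz, \(d>0\), rank-one active correction along \(U\)) and that scaling \(U\) by \(\Delta\) preserves continuity of the vector field on the switching surface. Continuity is preserved because the correction \(-\Delta\,(g(x)/\beta)BG^{-1}a\) still vanishes when \(g(x)=0\). For complex \(\Delta\), the GES conclusion should be read with \(\Delta\) restricted to the real values (or \(\Delta=1\)) for which the perturbed vector field is real.
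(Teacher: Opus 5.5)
Your argument is essentially the paper's: the matrix determinant lemma gives $\det(sI-\tilde A(\Delta))=\det(sI-A_0)\bigl(1+\Delta L(s)\bigr)$, the Hurwitz property of $A_0$ keeps the first factor nonzero on $\{\Re s\ge 0\}$, and GES then follows from Prop.~\ref{prop:ges_recall}. The paper applies Prop.~\ref{prop:ges_recall} directly without the structural check you flag, and your caution is justified: $c^\top U=1$ but $c^\top(\Delta U)=\Delta$, so for real $\Delta\neq 1$ the perturbed active mode no longer satisfies $\dot h=-\alpha h$ on $\mathcal R_-$ and is not the CBF-QP closed loop for $h$; for $\Delta\neq 1$ the GES conclusion therefore holds, in the paper as in your proposal, only if the cited proof does not use that identity.
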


\begin{proof}
Since \(A_0\) is Hurwitz, \(sI-A_0\) is invertible for every \(s\) with
\(\Re s\ge 0\). Moreover, \(\tilde A(\Delta)=A_0-\Delta UV^\top\) is a rank-one
perturbation of \(A_0\). Applying the matrix determinant lemma we get
\(
\det(sI-\tilde A(\Delta))
=
\det(sI-A_0)\bigl(1+\Delta L(s)\bigr).
\)
Hence, $\det(sI-\tilde A(\Delta))\neq 0$ for all $s$ with $\Re s\ge 0$ if and
only if $1+\Delta L(s)\neq 0$ for all such $s$. This proves the equivalence.
Since \(A_0\) is Hurwitz by assumption and \(\tilde A(\Delta)\) is Hurwitz,
Prop.~\ref{prop:ges_recall} implies that the origin is GES.
\end{proof}

Thm.~\ref{thm:active_loop_robustness} reduces active-mode robustness to a
scalar loop-stability condition. In the next subsections, we specialize this
result to gain, phase, and delay perturbations and derive the associated
robustness margins.

\subsection{Classical Robustness Margins}
\label{subsec:gain_robustness}

We now specialize the uncertainty model~\eqref{eq:AtildeDelta} to real gain perturbations. In
this case, Thm.~\ref{thm:active_loop_robustness} characterizes exactly when
the perturbed active mode remains Hurwitz and therefore when GES of the origin for the closed loop is guaranteed.

For real gain perturbations, we set $\Delta = \kappa$, $\kappa > 0$, so that
\begin{equation}
\tilde A(\kappa):=A_0-\kappa UV^\top.
\label{eq:Atilde_kappa}
\end{equation}

This motivates the following definition, which captures the notion of gain margin for~\eqref{eq:twomode}.

\begin{definition}[Stability-preserving gain interval]
\label{def:gain_interval}
Assume 
$\tilde A = \tilde{A}(1)$ is Hurwitz and define
\begin{equation}
\mathcal K:=\{\kappa>0:\tilde A(\kappa)\ \text{is Hurwitz}\}.
\label{eq:Kappa_set}
\end{equation}
The stability-preserving gain interval is the connected component of
\(\mathcal K\) that contains \(\kappa=1\).
\end{definition}

\begin{remark}
In classical single-loop control, gain margin is often summarized by one or two
boundary values relative to the nominal gain. In the present setting, the
natural object is the full interval of gain perturbations for which the active
mode remains Hurwitz, and the origin is guaranteed to be GES for the closed loop~\eqref{eq:twomode}. 
\end{remark}

The next corollary gives the frequency-domain characterization of the boundary
of the interval in Definition~\ref{def:gain_interval}. Intuitively, loss of
admissibility occurs when the Nyquist curve of $L(j\omega)$ reaches the
negative real axis at the critical value $-1/\kappa$.

\begin{corollary}
\label{cor:gain_interval_freq}
Let \((\underline\kappa,\bar\kappa)\) denote the stability-preserving gain interval, and let \(L(s)\) be defined by \eqref{eq:return_ratio}. If \(\kappa_\star\in\{\underline\kappa,\bar\kappa\}\cap(0,\infty)\), then there exists \(\omega\ge 0\) such that
\begin{equation}
1+\kappa_\star L(j\omega)=0.
\label{eq:gain_boundary_loop}
\end{equation}
Equivalently, \(\kappa_\star=\frac{1}{|L(j\omega)|}\) for some \(\omega\ge 0\) such that \(L(j\omega)\in\mathbb R_{<0}\).
\end{corollary}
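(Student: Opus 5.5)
The plan is a continuity argument on the eigenvalues of the family \(\tilde A(\kappa)=A_0-\kappa UV^\top\), combined with the determinant identity from the proof of Thm.~\ref{thm:active_loop_robustness}. Let \(\kappa_\star\) be a finite positive endpoint of the interval \((\underline\kappa,\bar\kappa)\). By Definition~\ref{def:gain_interval}, \(\tilde A(\kappa)\) is Hurwitz for all \(\kappa\) strictly inside the interval. Since the interval is a maximal connected component of \(\mathcal K\), \(\tilde A(\kappa_\star)\) itself is not Hurwitz. Otherwise, continuity of eigenvalues in \(\kappa\) would make \(\tilde A(\kappa)\) Hurwitz on an open neighborhood of \(\kappa_\star\), and the component could be extended, contradicting maximality.

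Next I locate an eigenvalue on the imaginary axis. The eigenvalues of \(\tilde A(\kappa)\) depend continuously on \(\kappa\), since they are the roots of a characteristic polynomial whose coefficients are polynomial in \(\kappa\). Take \(\kappa_n\to\kappa_\star\) from inside the interval. Every eigenvalue of \(\tilde A(\kappa_n)\) has negative real part, so every eigenvalue of \(\tilde A(\kappa_\star)\) has real part \(\le 0\). Because \(\tilde A(\kappa_\star)\) is not Hurwitz, some eigenvalue \(s_\star=j\omega_\star\) satisfies \(\Re s_\star=0\). The matrix is real, so its eigenvalues come in conjugate pairs, and we may take \(\omega_\star\ge 0\).

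Now I apply the matrix determinant lemma exactly as in Thm.~\ref{thm:active_loop_robustness}. Because \(A_0\) is Hurwitz, \(j\omega_\star I-A_0\) is invertible, and
\[
0=\det(j\omega_\star I-\tilde A(\kappa_\star))=\det(j\omega_\star I-A_0)\bigl(1+\kappa_\star L(j\omega_\star)\bigr).
\]
The first factor is nonzero, so \(1+\kappa_\star L(j\omega_\star)=0\), which is \eqref{eq:gain_boundary_loop}. For the equivalent form, note that \(L(j\omega_\star)=-1/\kappa_\star\) with \(\kappa_\star>0\). Hence \(L(j\omega_\star)\) is a negative real number, and \(|L(j\omega_\star)|=1/\kappa_\star\). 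Conversely, if \(L(j\omega)\in\mathbb R_{<0}\) and \(\kappa_\star=1/|L(j\omega)|\), then \(1+\kappa_\star L(j\omega)=1-1=0\), so the two formulations are equivalent.

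The main obstacle is the second step, namely justifying that the lost-stability eigenvalue lies exactly on the imaginary axis. Continuity of the spectrum gives this, but it needs care. The endpoint \(\kappa_\star\) is excluded from the open interval, which must be reconciled with \(\mathcal K\) being open: \(\mathcal K\) is open because the Hurwitz matrices form an open set and \(\kappa\mapsto\tilde A(\kappa)\) is continuous. Openness ensures the component is an open interval and that its finite positive endpoints do not belong to \(\mathcal K\). The case \(\omega_\star=0\), a real eigenvalue at the origin, is covered automatically, since \(L(0)\) is then real and negative.
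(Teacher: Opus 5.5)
Your proposal is correct and follows essentially the same route as the paper: the endpoint matrix $\tilde A(\kappa_\star)$ fails to be Hurwitz because the Hurwitz set is open, eigenvalue continuity from inside the interval rules out open right-half-plane eigenvalues, and the matrix determinant identity from the proof of Thm.~\ref{thm:active_loop_robustness} is then applied at the resulting imaginary-axis eigenvalue. Your version is only more explicit: it uses a limiting sequence, uses conjugate symmetry to get $\omega\ge 0$, and invokes the determinant factorization directly rather than the theorem's Hurwitz equivalence.
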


\begin{proof}
Let \(\kappa_\star\in\{\underline\kappa,\bar\kappa\}\cap(0,\infty)\) be a finite endpoint
of the stability-preserving gain interval. Since
\(\kappa\mapsto \tilde A(\kappa)\) is continuous and the set of Hurwitz matrices
is open, \(\tilde A(\kappa_\star)\) cannot be Hurwitz. Also, because
\(\kappa_\star\) is a boundary point of a connected stability interval,
\(\tilde A(\kappa_\star)\) cannot have an eigenvalue in \(\{\Re s>0\}\);
otherwise continuity of eigenvalues would imply instability for nearby gains
inside the interval, a contradiction. Hence \(\tilde A(\kappa_\star)\) has an
eigenvalue on the imaginary axis, say \(j\omega\) with \(\omega\ge 0\).
By Thm.~\ref{thm:active_loop_robustness},
\(
1+\kappa_\star L(j\omega)=0.
\)
Since \(\kappa_\star>0\), this implies \(L(j\omega)\in\mathbb R_{<0}\) and
\(
\kappa_\star=\frac{1}{|L(j\omega)|}.
\)
\end{proof}

For the special case of correction-channel scaling, one can also make a direct
safety statement. Indeed, in this case the perturbation modifies the implemented
control input itself, so its effect can be computed directly in the CBF
inequality. 

\begin{proposition}
\label{prop:safety_kappa}
Let \(\Cset\) be the safe set in \eqref{eq:safe_set}, \(\mathcal R_-\) the active region in \eqref{eq:regions}, and \(u_{\rm corr}(x)\) the correction term in \eqref{eq:u_corr}. If \(u_{\rm corr}(x)\) is scaled as
\[
u_{\rm corr}(x)\mapsto \kappa u_{\rm corr}(x), \qquad \kappa>0,
\]
then \(\Cset\) remains forward invariant for every \(\kappa\ge 1\). If \(0<\kappa<1\), then the CBF condition \eqref{eq:cbf} is violated for all \(x\in\mathcal R_-\).
\end{proposition}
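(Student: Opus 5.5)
The plan is to compute the CBF expression along the perturbed input directly. Write $\psi(x,u):=c^\top(Ax+Bu)+\alpha h(x)$ for the left-hand side of \eqref{eq:cbf}. With $u=-Kx$ we get $\psi(x,-Kx)=c^\top A_0x+\alpha c^\top x+\alpha d=g(x)$. In the active region the perturbed input is $u_\kappa(x)=-Kx+\kappa u_{\rm corr}(x)=-Kx-\kappa\frac{g(x)}{\beta}G^{-1}a$. Since $\psi$ is affine in $u$ with $\partial\psi/\partial u=c^\top B=a^\top$, we obtain
\[
\psi(x,u_\kappa(x))=g(x)-\kappa\frac{g(x)}{\beta}\,a^\top G^{-1}a=(1-\kappa)\,g(x).
\]
Here we use $\beta=a^\top G^{-1}a>0$, which holds because $G\succ 0$ and $a=B^\top c\neq 0$ (as $c^\top B\neq 0$). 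This one-line identity drives both claims.

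For $0<\kappa<1$ and $x\in\mathcal R_-$ we have $g(x)<0$, so $(1-\kappa)g(x)<0$ and \eqref{eq:cbf} fails, which is the second claim. For $\kappa\ge1$ the factor $1-\kappa\le 0$ and $g(x)<0$ give $\psi\ge 0$ on $\mathcal R_-$. On $\mathcal R_+$ the input is unperturbed, so $\psi=g(x)\ge0$. Hence along the perturbed closed loop $\dot h(x)=c^\top\dot x=\psi-\alpha h\ge -\alpha h(x)$ for every $x$.

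Forward invariance then follows by the comparison lemma. The perturbed closed loop is continuous and piecewise affine; continuity holds because $g=0$ on the switching surface, so the correction vanishes there. It is therefore globally Lipschitz and has unique global solutions. Along such a solution, $\dot h\ge-\alpha h$ gives $h(x(t))\ge e^{-\alpha t}h(x(0))\ge0$ whenever $x(0)\in\Cset$.

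The only delicate points are checking well-posedness of solutions (the Lipschitz argument above) and noting that $\psi$ is evaluated with the same $g$ that defines the regions. I expect neither to be a real obstacle.
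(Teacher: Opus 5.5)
Your proof is correct and follows the paper's argument. Both rest on the identity $c^\top(Ax+Bu_\kappa(x))+\alpha h(x)=(1-\kappa)g(x)$ on $\mathcal R_-$ (via $a^\top u_{\rm corr}(x)=-g(x)$), followed by the same sign analysis on $\mathcal R_-$ and $\mathcal R_+$; you additionally spell out the ``standard CBF argument'' the paper only invokes (global Lipschitz continuity of the continuous piecewise-affine perturbed vector field, and the comparison bound $h(x(t))\ge e^{-\alpha t}h(x(0))$), which is a sound but minor elaboration.
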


\begin{proof}
From \eqref{eq:u_corr}, we have 
\(
a^\top u_{\rm corr}(x)
=
-\frac{g(x)}{\beta}\,a^\top G^{-1}a
=
-g(x).
\)
Now consider the scaled correction input
\(
u_\kappa(x)=u_{\rm nom}(x)+\kappa u_{\rm corr}(x).
\)
Since \(u_{\rm nom}(x)=-Kx\), using the definition of $g(x)$ gives
\(
c^\top(Ax+Bu_{\rm nom}(x))+\alpha h(x)=g(x).
\)
Therefore,
\begin{align}
&c^\top(Ax+Bu_\kappa(x))+\alpha h(x)
=
g(x)+\kappa\,a^\top u_{\rm corr}(x) \notag\\
&=
g(x)-\kappa g(x)
= (1-\kappa)g(x).
\label{eq:cbf_scaled}
\end{align}
If \(x\in\mathcal R_-\), then \(g(x)<0\). Thus, for every \(\kappa\ge 1\),
\(
(1-\kappa)g(x)\ge 0,
\)
so the CBF inequality \eqref{eq:cbf} is satisfied in the active region. In the
inactive region \(\mathcal R_+\), the nominal input is already feasible by
definition of \(g(x)\), and scaling the correction does not affect the applied
input there. Hence the CBF condition holds everywhere, so the standard CBF
argument implies that \(\Cset\) remains forward invariant.
On the other hand, if \(0<\kappa<1\) and \(x\in\mathcal R_-\), then
\(
(1-\kappa)g(x)<0.
\)
By \eqref{eq:cbf_scaled}, this means that the CBF condition \eqref{eq:cbf} is violated throughout \(\mathcal R_-\).
\end{proof}

Prop.~\ref{prop:safety_kappa} shows that, for correction-channel
uncertainty, preservation of safety and preservation of exponential stability
need not coincide. 
Indeed, the stability-preserving gain interval might not coincide with the interval $[1,\infty)$, which is the set where safety is preserved.

Beyond gain robustness, the same loop-based viewpoint yields the corresponding phase- and delay-robustness conditions for the active mode. In particular, if a constant phase shift \(e^{-j\phi}\) or a delay factor \(e^{-s\tau}\) is introduced in the active loop, then by Thm.~\ref{thm:active_loop_robustness} the perturbed active mode remains Hurwitz if and only if the scalar closed-loop equations
\begin{equation}
1+e^{-j\phi}L(s)=0,
\qquad
1+e^{-s\tau}L(s)=0
\label{eq:phase_delay_loop}
\end{equation}
have no solutions in \(\{\Re s\ge 0\}\), where \(\phi\in\R\) and \(\tau\ge 0\).
As in classical loop analysis, the
boundary of the phase- and delay-robustness ranges is determined by
gain-crossover frequencies \(\omega\ge 0\) such that $|L(j\omega)|=1$.
At such frequencies, the boundary phase shift and delay satisfy
\begin{equation}
\phi \equiv \pi+\arg L(j\omega)\;(\mathrm{mod}\;2\pi),
\label{eq:phase_boundary}
\end{equation}
and
\begin{equation}
\tau=\frac{\pi+\arg L(j\omega)+2\pi k}{\omega},
\qquad k\in\mathbb Z,
\label{eq:delay_boundary}
\end{equation}
respectively.

\begin{remark}
Unlike the gain-scaling case in Prop.~\ref{prop:safety_kappa}, phase and
delay perturbations do not admit a comparably simple pointwise safety test,
since they do not act as static real scalings of the corrective input.
Instead, the results above give exact phase- and delay-robustness conditions
for the active mode and, via Prop.~\ref{prop:ges_recall}, sufficient
conditions for global exponential stability of the full safety-filtered closed
loop.
\end{remark}

\section{LMI-Based Certification and Synthesis}
\label{sec:lmi_certificates}

Section~\ref{sec:active_mode_robustness} gives exact gain, phase, and delay
robustness conditions for the perturbed active mode through the scalar loop
transfer function \(L(s)\). Together with the stability of the nominal mode
\(A_0\), these conditions provide sufficient guarantees that the origin is GES for the closed loop~\eqref{eq:twomode}  on \(\mathbb{R}^n\). This section
develops tractable LMI-based sufficient conditions that give conservative inner
approximations of the corresponding robustness ranges and can be used directly
for synthesis. We first focus on the gain case, for which the perturbation
enters affinely, and then give certified phase- and delay-robustness bounds.

\subsection{LMI-Certified Gain Intervals}
\label{subsec:lmi_gain_intervals}

We begin with the gain-robustness case. Since the perturbed active mode
\(\tilde A(\kappa)\) depends affinely on the gain parameter \(\kappa\), common
quadratic Lyapunov certificates yield tractable sufficient conditions for
guaranteeing prescribed gain intervals. These same certificates can then be
used to compute conservative inner approximations of the exact
stability-preserving gain interval from Definition~\ref{def:gain_interval}.

For a prescribed interval
$[\underline\kappa,\bar\kappa]\subset\R_{>0}$, $\underline\kappa\le 1\le \bar\kappa$
we recall that
\begin{equation}
\tilde A(\kappa)=A_0-\kappa UV^\top.
\label{eq:Atilde_kappa_lmi}
\end{equation}

The next result gives a tractable sufficient condition for guaranteeing
stability over the entire interval $[\underline{\kappa},\bar{\kappa}]$.
The key point is
that \(\tilde A(\kappa)\) depends affinely on \(\kappa\), so the Lyapunov
inequality can be enforced on the whole interval by checking only its
endpoints.

\begin{proposition}
\label{prop:lmi_gain_interval}
Fix \([\underline\kappa,\bar\kappa]\subset\mathbb{R}_{>0}\), and let \(\tilde A(\kappa)\) be defined by \eqref{eq:Atilde_kappa_lmi}. Suppose there exists \(P=P^\top\succ 0\) such that
\begin{equation}
A_0^\top P+PA_0\prec 0,
\label{eq:lmi_A0}
\end{equation}
and
\begin{equation}
\tilde A(\underline\kappa)^\top P+P\tilde A(\underline\kappa)\prec 0, \qquad
\tilde A(\bar\kappa)^\top P+P\tilde A(\bar\kappa)\prec 0.
\label{eq:lmi_kappa_high}
\end{equation}
Then \(\tilde A(\kappa)\) is Hurwitz for all \(\kappa\in[\underline\kappa,\bar\kappa]\), and the origin is GES for the closed loop \eqref{eq:twomode} for every \(\kappa\in[\underline\kappa,\bar\kappa]\).
\end{proposition}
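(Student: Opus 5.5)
The plan is to use the affine dependence of \(\tilde A(\kappa)\) on \(\kappa\) to pass from the two endpoint LMIs in \eqref{eq:lmi_kappa_high} to the whole interval. Then a quadratic Lyapunov argument gives the Hurwitz property, and Prop.~\ref{prop:ges_recall} gives GES.

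\textbf{Step 1 (convex combination).} Fix \(\kappa\in[\underline\kappa,\bar\kappa]\). If \(\underline\kappa=\bar\kappa\) there is nothing to do. Otherwise write \(\kappa=\lambda\underline\kappa+(1-\lambda)\bar\kappa\) with \(\lambda\in[0,1]\). Since \(\tilde A(\kappa)=A_0-\kappa UV^\top\) is affine in \(\kappa\), we have \(\tilde A(\kappa)=\lambda\tilde A(\underline\kappa)+(1-\lambda)\tilde A(\bar\kappa)\). The map \(M\mapsto M^\top P+PM\) is linear, so
\[
\tilde A(\kappa)^\top P+P\tilde A(\kappa)=\lambda\bigl(\tilde A(\underline\kappa)^\top P+P\tilde A(\underline\kappa)\bigr)+(1-\lambda)\bigl(\tilde A(\bar\kappa)^\top P+P\tilde A(\bar\kappa)\bigr).
\]
This is a convex combination of two negative definite matrices, so it is negative definite.

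\textbf{Step 2 (Hurwitz).} Take the Lyapunov function \(V(z)=z^\top Pz\) with \(P\succ0\). The strict inequality from Step 1 shows that \(\tilde A(\kappa)\) is Hurwitz. One can also see this directly: if \(\tilde A(\kappa)v=\mu v\) with \(v\neq0\), then \(2\Re\mu\, v^*Pv<0\), hence \(\Re\mu<0\).

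\textbf{Step 3 (GES).} For each such \(\kappa\), the perturbed closed loop is again of the two-mode form \eqref{eq:twomode}, with active matrix \(\tilde A(\kappa)\) in place of \(\tilde A\). Its inactive mode is unchanged, namely \(A_0\), which is Hurwitz by assumption and also by \eqref{eq:lmi_A0}. The affine term becomes \(\kappa\tilde b\), and the regions \(\mathcal R_\pm\) are unchanged because scaling \(u_{\rm corr}\) leaves \(g\) intact. I would then invoke Prop.~\ref{prop:ges_recall}, as in the proof of Thm.~\ref{thm:active_loop_robustness}, to conclude GES.

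The main obstacle is justifying that Prop.~\ref{prop:ges_recall} applies to the scaled system, since it was stated for \(\kappa=1\). The scaled system has the same piecewise-affine structure and is still continuous across \(g(x)=0\), because the correction is proportional to \(g\) and so vanishes there. I would argue that the proof in \cite{PM-SSM-ADA:26} only uses these structural facts.

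If that argument falls through, I would use the common matrix \(P\) instead. The condition \eqref{eq:lmi_A0} is included precisely so that \(z^\top Pz\) decreases in both modes. This yields a common quadratic Lyapunov function for the switched linear parts. The remaining care is the affine offset: one would shift coordinates to the equilibrium of the active mode, or note that the active mode's equilibrium lies outside \(\mathcal R_-\), so that trajectories are driven back into the inactive mode.
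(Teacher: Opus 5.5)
Your Steps~1 and~2 are the paper's argument. For Step~3 the paper does exactly what your main route does: it cites Prop.~\ref{prop:ges_recall}, using only that $A_0$ is Hurwitz, and does not remark that this result concerns the unscaled loop $\kappa=1$. So relative to the paper you are not missing anything, and the obstacle you flag is real; the paper glosses over it.

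Your justification for the main route, that the cited proof ``only uses these structural facts,'' cannot be taken for granted. At $\kappa=1$ the active mode has an extra property: $c^\top\tilde A=-\alpha c^\top$, because $c^\top U=1$ (equivalently, $\dot h=-\alpha h$ on $\mathcal R_-$). For general $\kappa$, however, $c^\top\tilde A(\kappa)=(1-\kappa)c^\top A_0-\kappa\alpha c^\top$, so any argument that exploits this property does not transfer.

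Your fallback is the right idea and would make the proposition self-contained. It is also the only place where \eqref{eq:lmi_A0} is genuinely needed; the paper's proof never uses it. As written, though, it does not deal with the affine offset.
\begin{itemize}
\item Shifting coordinates to the active-mode equilibrium just moves the offset into the inactive mode. That point is also only a virtual equilibrium, not the one you want to converge to.
\item The fact that this virtual equilibrium lies in $\mathcal R_+$ only shows that each sojourn in $\mathcal R_-$ is finite. It neither rules out infinitely many switches with growth nor gives a rate.
\end{itemize}

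The missing observation is that the offset is absorbed by a sector-type rewriting. On $\mathcal R_-$ one has $V^\top x=g(x)-\alpha d<-\alpha d<0$. Hence $\theta(x):=g(x)/V^\top x=1+\alpha d/V^\top x\in(0,1)$, and
\[
A_0x-\kappa U g(x)=\bigl(A_0-\kappa\theta(x)UV^\top\bigr)x=\bigl(1-\theta(x)\bigr)A_0x+\theta(x)\tilde A(\kappa)x .
\]
Take $W(x)=x^\top Px$, and let $\varepsilon>0$ be the smaller of the decay constants in \eqref{eq:lmi_A0} and in your Step~1. On $\mathcal R_-$,
\[
\dot W=x^\top\Bigl[\bigl(1-\theta\bigr)\bigl(A_0^\top P+PA_0\bigr)+\theta\bigl(\tilde A(\kappa)^\top P+P\tilde A(\kappa)\bigr)\Bigr]x\le-\varepsilon\|x\|^2 .
\]
On $\mathcal R_+$, $\dot W=x^\top(A_0^\top P+PA_0)x\le-\varepsilon\|x\|^2$.

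The vector field $A_0x-\kappa U\min\{g(x),0\}$ is globally Lipschitz, so solutions are unique and $W$ is differentiable along them. This gives $W(x(t))\le e^{-\varepsilon t/\lambda_{\max}(P)}W(x(0))$, i.e.\ GES, with no appeal to Prop.~\ref{prop:ges_recall}.

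The same convexity also shows that, given \eqref{eq:lmi_A0}, the lower-endpoint inequality in \eqref{eq:lmi_kappa_high} is redundant. Indeed, $\tilde A(\kappa)=(1-\kappa/\bar\kappa)A_0+(\kappa/\bar\kappa)\tilde A(\bar\kappa)$ for $0\le\kappa\le\bar\kappa$.
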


\begin{proof}
For every \(\kappa\in[\underline\kappa,\bar\kappa]\), there exists
\(\lambda\in[0,1]\) such that
$\kappa=\lambda \underline\kappa+(1-\lambda)\bar\kappa$.
Since \(\tilde A(\kappa)\) depends affinely on \(\kappa\),
\begin{equation}
\tilde A(\kappa)
=
\lambda \tilde A(\underline\kappa)
+(1-\lambda)\tilde A(\bar\kappa).
\label{eq:Atilde_convex}
\end{equation}
Hence,
\begin{align}
\tilde A(\kappa)^\top P+P\tilde A(\kappa)
&=
\lambda\bigl(\tilde A(\underline\kappa)^\top P
+P\tilde A(\underline\kappa)\bigr) \notag\\
&\quad
+(1-\lambda)\bigl(\tilde A(\bar\kappa)^\top P
+P\tilde A(\bar\kappa)\bigr).
\label{eq:lyap_convex}
\end{align}
By~\eqref{eq:lmi_kappa_high}, the right-hand side is
negative definite, so \(\tilde A(\kappa)\) is Hurwitz for every
\(\kappa\in[\underline\kappa,\bar\kappa]\). Since \(A_0\) is Hurwitz, Prop.~\ref{prop:ges_recall} implies that the origin is GES.
\end{proof}

Prop.~\ref{prop:lmi_gain_interval} gives a conservative but convex
certificate for a prescribed gain interval. We next use it to define a
computable inner approximation of the exact interval in Definition~\ref{def:gain_interval}.

\begin{definition}[LMI-certified gain interval]
\label{def:lmi_gain_interval}
Let \(\tilde A(\kappa)\) be defined by \eqref{eq:Atilde_kappa_lmi}, and assume \(\tilde A(1)\) is Hurwitz. An interval \([\kappa_1,\kappa_2]\subset\R_{>0}\) is called \emph{LMI-certifiable} if there exists \(P=P^\top\succ 0\) such that \eqref{eq:lmi_A0}--\eqref{eq:lmi_kappa_high} hold with \((\underline\kappa,\bar\kappa)=(\kappa_1,\kappa_2)\).
Define
\begin{align}
\overline\kappa_{\rm LMI}
&:=\sup\{\kappa\ge 1:\ [1,\kappa]\ \text{is LMI-certifiable}\}, \label{eq:kappa_up_LMI} \\
\underline\kappa_{\rm LMI}
&:=\inf\{\kappa\in(0,1]:\ [\kappa,1]\ \text{is LMI-certifiable}\}. \label{eq:kappa_low_LMI}
\end{align}
The interval \([\underline\kappa_{\rm LMI},\overline\kappa_{\rm LMI}]\) is called the \emph{LMI-certified gain interval}.
\end{definition}

The next result shows that the interval in
Definition~\ref{def:lmi_gain_interval} is a conservative inner approximation of
the exact stability-preserving gain interval.

\begin{proposition}
\label{prop:inner_approx_GM}
Let \([\kappa_{\min},\kappa_{\max}]\) denote the stability-preserving gain
interval from Definition~\ref{def:gain_interval}. Then
\begin{equation}
[\underline\kappa_{\rm LMI},\overline\kappa_{\rm LMI}]
\subseteq
[\kappa_{\min},\kappa_{\max}].
\label{eq:inner_bounds}
\end{equation}
\end{proposition}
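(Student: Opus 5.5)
The argument shows that every LMI-certifiable interval containing $\kappa=1$ lies in the connected component of $\mathcal K$ that contains $1$. The inclusion then follows by taking suprema and infima.

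\emph{Upper endpoint.} Take any $\kappa\ge 1$ such that $[1,\kappa]$ is LMI-certifiable. By Prop.~\ref{prop:lmi_gain_interval} with $(\underline\kappa,\bar\kappa)=(1,\kappa)$, the matrix $\tilde A(\kappa')$ is Hurwitz for every $\kappa'\in[1,\kappa]$, so $[1,\kappa]\subseteq\mathcal K$. Since $[1,\kappa]$ is connected and contains $1$, it lies in the connected component of $\mathcal K$ containing $1$. That component is the stability-preserving gain interval, so $\kappa\le\kappa_{\max}$. Taking the supremum over all such $\kappa$ gives $\overline\kappa_{\rm LMI}\le\kappa_{\max}$. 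This uses only that the supremum of a set bounded above by $\kappa_{\max}$ is at most $\kappa_{\max}$, with the convention that $\kappa_{\max}$ may be $+\infty$.

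\emph{Lower endpoint.} The argument is symmetric. For any $\kappa\in(0,1]$ with $[\kappa,1]$ LMI-certifiable, Prop.~\ref{prop:lmi_gain_interval} gives $[\kappa,1]\subseteq\mathcal K$. Hence $[\kappa,1]$ lies in the same connected component, and $\kappa\ge\kappa_{\min}$. Taking the infimum gives $\underline\kappa_{\rm LMI}\ge\kappa_{\min}$.

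\emph{Well-definedness.} The defining sets are nonempty because $\tilde A(1)$ is assumed Hurwitz. Taking $P$ as the solution of a Lyapunov equation for $\tilde A(1)$ certifies the degenerate interval $[1,1]$, provided the LMI~\eqref{eq:lmi_A0} for $A_0$ also holds with that $P$. If no such common $P$ exists, the sets may be empty. In that case the convention $\sup\emptyset$, $\inf\emptyset$ must be addressed, or nonemptiness assumed, and I would note this as a caveat.

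\emph{Endpoint handling.} This is the main subtlety. The stated exact interval is written as closed, $[\kappa_{\min},\kappa_{\max}]$, while the set $\mathcal K$ is open. Since the inclusion concerns endpoints obtained as limits, the endpoint inequalities above suffice to give the claimed inclusion of closed intervals. In short, the inclusion follows from the endpoint inequalities and does not require the LMI sets to attain their suprema or infima.
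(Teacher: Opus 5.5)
Your proof is correct and is essentially the paper's argument: Prop.~\ref{prop:lmi_gain_interval} places each certifiable $[1,\kappa]$ (resp.\ $[\kappa,1]$) inside the stability-preserving interval, and taking the supremum (resp.\ infimum) gives the two endpoint inequalities. Your explicit connected-component step and your endpoint and nonemptiness remarks only make precise what the paper leaves implicit, and the emptiness caveat is harmless: by Lem.~\ref{lem:monotone_bisection}, either defining set is nonempty only if $[1,1]$ is certifiable, in which case both contain $1$, and otherwise both are empty and the inclusion holds trivially under the usual conventions.
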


\begin{proof}
If \([1,\kappa]\) is LMI-certifiable, then by
Prop.~\ref{prop:lmi_gain_interval}, the matrix \(\tilde A(\kappa)\) is Hurwitz throughout that interval, so
\([1,\kappa]\subseteq[\kappa_{\min},\kappa_{\max}]\). Taking the supremum over
all such \(\kappa\ge 1\) gives \(\overline\kappa_{\rm LMI}\le\kappa_{\max}\).
The argument for \(\underline\kappa_{\rm LMI}\ge\kappa_{\min}\) is analogous on
\((0,1]\). Therefore, \eqref{eq:inner_bounds} holds. 
\end{proof}

The next lemma records the monotonicity property that allows the endpoints of
the certified interval to be computed by repeated feasibility tests on nested
intervals.

\begin{lemma}
\label{lem:monotone_bisection}
If an interval \([\kappa_1,\kappa_2]\subset\R_{>0}\) is LMI-certifiable, then
every subinterval \([\hat\kappa_1,\hat\kappa_2]\subseteq[\kappa_1,\kappa_2]\)
is also LMI-certifiable.
\end{lemma}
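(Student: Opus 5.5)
The plan is to reuse, for the subinterval, the same matrix \(P\) that certifies \([\kappa_1,\kappa_2]\), and to obtain the endpoint inequalities on \([\hat\kappa_1,\hat\kappa_2]\) from the convexity argument already used in the proof of Prop.~\ref{prop:lmi_gain_interval}. Concretely, suppose \(P=P^\top\succ 0\) satisfies \eqref{eq:lmi_A0} together with \eqref{eq:lmi_kappa_high} at \((\underline\kappa,\bar\kappa)=(\kappa_1,\kappa_2)\). Condition \eqref{eq:lmi_A0} involves only \(A_0\) and \(P\), not the interval, so it carries over to the subinterval unchanged. It therefore suffices to show that
\[
\tilde A(\hat\kappa_i)^\top P+P\tilde A(\hat\kappa_i)\prec 0,\qquad i=1,2.
\]

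For this, I would note that each \(\hat\kappa_i\) lies in \([\kappa_1,\kappa_2]\), so it can be written as \(\hat\kappa_i=\lambda_i\kappa_1+(1-\lambda_i)\kappa_2\) with \(\lambda_i\in[0,1]\). The map \(\kappa\mapsto\tilde A(\kappa)=A_0-\kappa UV^\top\) is affine, so \eqref{eq:Atilde_convex} holds at \(\kappa=\hat\kappa_i\). Hence the Lyapunov expression at \(\hat\kappa_i\) equals the same convex combination of the two endpoint expressions, exactly as in \eqref{eq:lyap_convex}.

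A convex combination of negative definite matrices, with nonnegative weights summing to one, is negative definite. Indeed, for any \(v\neq 0\), the quadratic form is a nonnegative combination of strictly negative numbers in which at least one weight is positive. Therefore both endpoint LMIs hold on \([\hat\kappa_1,\hat\kappa_2]\) with the same \(P\), and the subinterval is LMI-certifiable in the sense of Definition~\ref{def:lmi_gain_interval}.

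There is no serious obstacle here. The only point requiring care is the degenerate case \(\hat\kappa_1=\hat\kappa_2\) or \(\lambda_i\in\{0,1\}\), which the convex-combination argument covers directly. It is also worth observing that the standing assumption in Definition~\ref{def:lmi_gain_interval} that \(\tilde A(1)\) is Hurwitz is a property of the system rather than of the interval, so it is inherited automatically.
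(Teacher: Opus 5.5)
Your proof is correct and follows essentially the same argument as the paper: you reuse the certifying $P$, write each subinterval endpoint as a convex combination of $\kappa_1,\kappa_2$, and use the affinity of $\kappa\mapsto\tilde A(\kappa)$ to express the Lyapunov expression as a convex combination of negative definite matrices. Your remarks that \eqref{eq:lmi_A0} does not depend on the interval, and that convex combinations preserve negative definiteness, spell out details the paper leaves implicit.
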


\begin{proof}
Let \(P\succ 0\) certify \([\kappa_1,\kappa_2]\). For any
\(\hat\kappa\in[\kappa_1,\kappa_2]\), write
$\hat\kappa=\mu\kappa_1+(1-\mu)\kappa_2$,
with $\mu\in[0,1]$.
Since \(\tilde A(\kappa)\) depends affinely on \(\kappa\),
\begin{align*}
\tilde A(\hat\kappa)^\top P+P\tilde A(\hat\kappa)
&=
\mu\bigl(\tilde A(\kappa_1)^\top P+P\tilde A(\kappa_1)\bigr) \\
&\quad +(1-\mu)\bigl(\tilde A(\kappa_2)^\top P+P\tilde A(\kappa_2)\bigr),
\end{align*}
which is negative definite because \(P\) certifies the endpoints. Hence the
same \(P\) certifies every subinterval.
\end{proof}

Lem.~\ref{lem:monotone_bisection} implies that the endpoints \(\underline\kappa_{\rm LMI}\) and \(\overline\kappa_{\rm LMI}\) can be computed by repeated feasibility tests on refined intervals using the LMIs in Prop.~\ref{prop:lmi_gain_interval}. In the next subsection, these LMIs are used directly as synthesis constraints to design controllers with prescribed gain robustness.

\subsection{Synthesis for Prescribed Gain Robustness}
\label{subsec:gain_synthesis}

We now turn from certification to synthesis. Here the objective is not merely
to verify robustness for a given controller, but to design the nominal feedback
gain \(K\) so that a desired gain interval \([\underline\kappa,\bar\kappa]\) guarantees that the origin is GES for the closed loop~\eqref{eq:twomode}. The key point is
that, for fixed \((\underline\kappa,\bar\kappa)\), the endpoint LMIs in
Prop.~\ref{prop:lmi_gain_interval} can be converted into convex
synthesis conditions by the standard change of variables \(X=P^{-1}\) and
\(Y=KX\).

Recall that
\begin{equation}
\tilde A(\kappa)=A_0-\kappa UV^\top,
\qquad
A_0=A-BK,
\label{eq:Atilde_kappa_synthesis}
\end{equation}
with \(U\) fixed by \(B\), \(c\), and \(G\). For synthesis, it is convenient to
rewrite \(\tilde A(\kappa)\) in a form affine in \(K\).

The next lemma gives this decomposition. The key observation is that, once
\(\kappa\), \(G\), and \(\alpha\) are fixed, the dependence of the perturbed
active mode on the gain \(K\) is affine.

\begin{lemma}
\label{lem:Atilde_affine_in_K}
Let \(\tilde A(\kappa)\) be given by \eqref{eq:Atilde_kappa_synthesis}. For each \(\kappa>0\), define
\begin{equation}\begin{aligned}
A_\kappa &:= (I-\kappa Uc^\top)A-\kappa\alpha Uc^\top, \\
B_\kappa &:= (I-\kappa Uc^\top)B.
\label{eq:ABk}
\end{aligned}\end{equation}
Then \(\tilde A(\kappa)=A_\kappa-B_\kappa K\).
\end{lemma}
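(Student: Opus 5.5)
The identity is a direct algebraic expansion, so the plan is to substitute the definitions and regroup terms. The key observation is that $U$ depends only on $B$, $c$, $G$, and not on $K$, while $V$ depends on $K$ affinely through $A_0$.

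First, write $V^\top = c^\top(A_0+\alpha I) = c^\top(A-BK) + \alpha c^\top$ using \eqref{eq:UV}. Substituting into \eqref{eq:Atilde_kappa_synthesis} gives
\begin{align*}
\tilde A(\kappa) &= (A-BK) - \kappa U\bigl(c^\top A - c^\top BK + \alpha c^\top\bigr)\\
&= A - \kappa Uc^\top A - \kappa\alpha Uc^\top - BK + \kappa Uc^\top BK .
\end{align*}

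Next, group the $K$-independent terms as $(I-\kappa Uc^\top)A - \kappa\alpha Uc^\top = A_\kappa$. Factor the $K$-dependent terms as $-(B-\kappa Uc^\top B)K = -(I-\kappa Uc^\top)BK = -B_\kappa K$. This yields $\tilde A(\kappa)=A_\kappa-B_\kappa K$, matching \eqref{eq:ABk}.

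There is no real obstacle here; the only point requiring care is remembering that $V$ depends on $K$ through $A_0$, while $U$ does not. One might also note, as a sanity check, that $c^\top U = c^\top BG^{-1}a/\beta = a^\top G^{-1}a/\beta = 1$. Hence $c^\top B_\kappa = (1-\kappa)c^\top B$, which is consistent with Prop.~\ref{prop:safety_kappa}, although this fact is not needed for the identity itself.
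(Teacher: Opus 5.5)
Your proposal is correct and follows essentially the same route as the paper: substitute \(V^\top=c^\top(A-BK+\alpha I)\) into \(\tilde A(\kappa)=A_0-\kappa UV^\top\), expand, and group the \(K\)-independent and \(K\)-dependent terms to obtain \(A_\kappa-B_\kappa K\). Your side remark that \(c^\top U=1\), and hence \(c^\top B_\kappa=(1-\kappa)c^\top B\), is also correct, though, as you say, the identity does not need it.
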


\begin{proof}
Using \(A_0=A-BK\) and \(V^\top=c^\top(A_0+\alpha I)\), one has
\(
\tilde A(\kappa)
=
A-BK-\kappa Uc^\top(A-BK+\alpha I).
\)
Expanding terms gives
\(
\tilde A(\kappa)
=
\bigl((I-\kappa Uc^\top)A-\kappa\alpha Uc^\top\bigr)
-
\bigl((I-\kappa Uc^\top)B\bigr)K,
\)
which is exactly $A_{\kappa}-B_{\kappa}K$.
\end{proof}

Lem.~\ref{lem:Atilde_affine_in_K} allows the endpoint certificates in
Prop.~\ref{prop:lmi_gain_interval} to be written as convex LMIs in the
variables \(X\) and \(Y\).

The next proposition gives the corresponding synthesis condition. It shows that
a prescribed guaranteed gain interval can be enforced by solving a single
feasibility problem.

\begin{proposition}
\label{prop:gain_interval_synthesis}
Fix \([\underline\kappa,\bar\kappa]\subset\mathbb{R}_{>0}\) with \(\underline\kappa\le 1\le \bar\kappa\), and let \(A_\kappa\) and \(B_\kappa\) be defined as in \eqref{eq:ABk}. If there exist \(X=X^\top\succ 0\) and \(Y\) such that
\begin{equation}
AX+XA^\top-BY-Y^\top B^\top\prec 0,
\label{eq:lmi_nominal_synthesis}
\end{equation}
\begin{equation}
A_{\underline\kappa}X+XA_{\underline\kappa}^\top
-B_{\underline\kappa}Y-Y^\top B_{\underline\kappa}^\top\prec 0,
\label{eq:lmi_low_synthesis}
\end{equation}
and
\begin{equation}
A_{\bar\kappa}X+XA_{\bar\kappa}^\top
-B_{\bar\kappa}Y-Y^\top B_{\bar\kappa}^\top\prec 0,
\label{eq:lmi_high_synthesis}
\end{equation}
then the gain \(K=YX^{-1}\) guarantees that the origin is GES for the closed loop~\eqref{eq:twomode} for every \(\kappa\in[\underline\kappa,\bar\kappa]\).
\end{proposition}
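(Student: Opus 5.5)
The plan is to reduce the synthesis conditions to the analysis certificate of Prop.~\ref{prop:lmi_gain_interval} by a congruence transformation. Let \(X\succ 0\) and \(Y\) satisfy \eqref{eq:lmi_nominal_synthesis}--\eqref{eq:lmi_high_synthesis}. Set \(K:=YX^{-1}\), so that \(Y=KX\), and take \(P:=X^{-1}\succ 0\) as the candidate common Lyapunov matrix. The goal is to show that \(P\) satisfies \eqref{eq:lmi_A0} and \eqref{eq:lmi_kappa_high} for the closed-loop matrices generated by this \(K\).

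For the nominal mode, substitute \(Y=KX\) into \eqref{eq:lmi_nominal_synthesis}. This gives
\[
(A-BK)X+X(A-BK)^\top=A_0X+XA_0^\top\prec 0.
\]
Pre- and post-multiplying by \(X^{-1}=P\), which is a congruence and therefore preserves definiteness, yields \(PA_0+A_0^\top P\prec 0\). This is \eqref{eq:lmi_A0}, and it also shows that \(A_0\) is Hurwitz, as the standing assumption requires.

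For the endpoints, Lem.~\ref{lem:Atilde_affine_in_K} gives \(\tilde A(\kappa)=A_\kappa-B_\kappa K\) for \(\kappa\in\{\underline\kappa,\bar\kappa\}\). With \(Y=KX\), the left-hand side of \eqref{eq:lmi_low_synthesis} equals \(\tilde A(\underline\kappa)X+X\tilde A(\underline\kappa)^\top\), and the same holds for \eqref{eq:lmi_high_synthesis} at \(\bar\kappa\). The same congruence by \(P\) then gives \(\tilde A(\kappa)^\top P+P\tilde A(\kappa)\prec 0\) at both endpoints, which is \eqref{eq:lmi_kappa_high}. Prop.~\ref{prop:lmi_gain_interval} now applies directly: \(\tilde A(\kappa)\) is Hurwitz for all \(\kappa\in[\underline\kappa,\bar\kappa]\), and by Prop.~\ref{prop:ges_recall} the origin is GES for \eqref{eq:twomode}, with the correction channel scaled by \(\kappa\), for every \(\kappa\) in the interval.

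There is no real obstacle. The one point needing care is that the synthesis LMIs are in "dual" form, \(MX+XM^\top\prec0\), whereas Prop.~\ref{prop:lmi_gain_interval} is stated in primal form, \(M^\top P+PM\prec0\). The conversion is exactly the congruence \(P(\cdot)P\) with \(P=X^{-1}\), since \(P(MX+XM^\top)P=PM+M^\top P\). One should also note that \(U\) does not depend on \(K\), so \(A_\kappa\) and \(B_\kappa\) are fixed data. This is what makes the conditions jointly linear in \((X,Y)\), although linearity is not needed for the implication itself.

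Finally, since \(\underline\kappa\le 1\le\bar\kappa\), the nominal safety filter \(\kappa=1\) lies in the certified interval. Hence \(\tilde A=\tilde A(1)\) is Hurwitz, as Definition~\ref{def:gain_interval} assumes.
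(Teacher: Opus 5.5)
Your proposal is correct and follows the paper's proof essentially step for step. Both arguments set $P=X^{-1}$ and substitute $Y=KX$, then use Lem.~\ref{lem:Atilde_affine_in_K} to identify the synthesis LMIs with $\tilde A(\kappa)X+X\tilde A(\kappa)^\top\prec 0$ at the endpoints and $A_0X+XA_0^\top\prec 0$. They convert these by the congruence $P(\cdot)P$ and invoke Prop.~\ref{prop:lmi_gain_interval}, which applies Prop.~\ref{prop:ges_recall} to the $\kappa$-scaled loop exactly as the paper does. Your extra remarks are correct and make explicit what the paper leaves implicit: the nominal LMI itself certifies that $A_0$ is Hurwitz, and $\kappa=1$ lies in the certified interval.
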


\begin{proof}
Let \(P:=X^{-1}\). Premultiplying and postmultiplying
\eqref{eq:lmi_nominal_synthesis} by \(P\) gives
\[
(A-BK)^\top P+P(A-BK)\prec 0.
\]

Using Lem.~\ref{lem:Atilde_affine_in_K},
\eqref{eq:lmi_low_synthesis} and \eqref{eq:lmi_high_synthesis} yield
\[
\tilde A(\underline\kappa)^\top P+P\tilde A(\underline\kappa)\prec 0,
\qquad
\tilde A(\bar\kappa)^\top P+P\tilde A(\bar\kappa)\prec 0.
\]
Therefore, Prop.~\ref{prop:lmi_gain_interval} applies and implies that
\(\tilde A(\kappa)\) is Hurwitz for every
\(\kappa\in[\underline\kappa,\bar\kappa]\), and that the origin is GES for the closed loop~\eqref{eq:twomode}.
\end{proof}

Prop.~\ref{prop:gain_interval_synthesis} gives a convex design condition
for selecting \(K\) so that a desired gain interval is guaranteed to preserve
global exponential stability. It also allows one to search over
candidate interval endpoints and thereby design \(K\) to enlarge the certified
gain-robustness range of the safety filter. In practice, this can be done by fixing a parameterization of the interval
(e.g., \([\underline\kappa,\bar\kappa]=[1/\gamma,\gamma]\)) and solving the
LMIs repeatedly while increasing \(\gamma\) until feasibility is lost.

\subsection{Certified Phase- and Delay-Robustness Bounds}
\label{subsec:lmi_phase_delay}

The gain case is especially simple because the perturbation enters affinely in
\(\kappa\). Phase robustness can also be certified, but now through a
conservative sufficient condition. The phase-uncertainty set is an arc of the
unit circle, which is not directly convenient for LMI-based certification.
Enclosing this arc in a disk yields a conservative but tractable uncertainty
description, reducing the problem to robust stability with respect to a bounded
scalar complex perturbation.

\begin{lemma}
\label{lem:phase_disk_reduction}
Fix \(\bar\phi\in[0,\pi/2]\), and define $c_\phi:=\cos\bar\phi$, $r_\phi:=\sin\bar\phi$.
Then,
\begin{equation}
\{e^{-j\phi}:|\phi|\le \bar\phi\}
\subseteq
\{\Delta\in\mathbb C:|\Delta-c_\phi|\le r_\phi\}.
\label{eq:phase_disk_containment}
\end{equation}
\end{lemma}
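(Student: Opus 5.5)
The plan is a direct computation: take an arbitrary point of the arc and show that its distance to the center \(c_\phi\) is at most \(r_\phi\). Fix \(\phi\) with \(|\phi|\le\bar\phi\) and set \(\Delta=e^{-j\phi}=\cos\phi-j\sin\phi\). Expanding the squared modulus gives
\[
|\Delta-c_\phi|^2=(\cos\phi-\cos\bar\phi)^2+\sin^2\phi
=1-2\cos\phi\cos\bar\phi+\cos^2\bar\phi .
\]
The containment is therefore equivalent to
\[
1-2\cos\phi\cos\bar\phi+\cos^2\bar\phi\le\sin^2\bar\phi ,
\]
which simplifies to \(2\cos^2\bar\phi\le 2\cos\phi\cos\bar\phi\), that is, \(\cos\bar\phi\,(\cos\bar\phi-\cos\phi)\le 0\).

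It remains to check the signs of the two factors. Because \(\bar\phi\in[0,\pi/2]\), the first factor satisfies \(\cos\bar\phi\ge 0\). For the second, \(|\phi|\le\bar\phi\le\pi/2\le\pi\), and cosine is even and nonincreasing on \([0,\pi]\), so \(\cos\phi=\cos|\phi|\ge\cos\bar\phi\). The product is thus a nonnegative number times a nonpositive one, hence nonpositive, and \eqref{eq:phase_disk_containment} follows.

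No step is a real obstacle; the only point needing care is the sign of the first factor. The assumption \(\bar\phi\le\pi/2\) is exactly what guarantees \(\cos\bar\phi\ge 0\). One can also see the result geometrically: the disk centered at \(\cos\bar\phi\) with radius \(\sin\bar\phi\) passes through the arc endpoints \(e^{\mp j\bar\phi}\), and its boundary is tangent there to the rays from the origin, so the arc lies inside the disk.
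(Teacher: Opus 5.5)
Your argument is correct and matches the paper's proof: expand $|\Delta-c_\phi|^2=1-2c_\phi\cos\phi+c_\phi^2$ and bound it by $1-c_\phi^2=r_\phi^2$ using $\cos\phi\ge c_\phi$, while making explicit the condition $c_\phi\ge 0$ that the paper uses tacitly when it multiplies that inequality by $-2c_\phi$. Your closing geometric aside is wrong, however: $e^{\mp j\bar\phi}=c_\phi\mp j r_\phi$ are the endpoints of a vertical diameter of the disk, so the boundary is horizontal there rather than tangent to the rays from the origin (for $0<\bar\phi<\pi/2$, the circle tangent to those rays at those points is centered at $1/\cos\bar\phi$ with radius $\tan\bar\phi$); drop that remark, since the proof does not use it.
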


\begin{proof}
Let \(\Delta=e^{-j\phi} = \cos\phi-j\sin\phi\) with \(|\phi|\le \bar\phi\).
Then,
\begin{align*}
|\Delta-c_\phi|^2
&=(\cos\phi-c_\phi)^2+\sin^2\phi =1-2c_\phi\cos\phi+c_\phi^2.
\end{align*}
Since \(|\phi|\le \bar\phi\) and \(\bar\phi\in[0,\pi/2]\), we have
\(\cos\phi\ge \cos\bar\phi=c_\phi\). Hence
\[
|\Delta-c_\phi|^2
\le 1-c_\phi^2
=\sin^2\bar\phi
=r_\phi^2,
\]
which proves \eqref{eq:phase_disk_containment}.
\end{proof}

Now, define 
\begin{equation}
A_\phi:=A_0-c_\phi UV^\top.
\label{eq:A_phi}
\end{equation}
For every complex scalar \(\Delta\) in the enclosing disk, one can write
$\Delta=c_\phi+\delta$, with $\delta\in\mathbb C$ satisfying $|\delta|\le r_\phi$.
Therefore
$\tilde A(\Delta)=A_\phi-\delta UV^\top$.
Thus the phase-certification problem reduces to robust stability with respect to
the bounded complex scalar perturbation \(\delta\).

The next result gives a tractable sufficient condition for certifying a phase
range. The key idea is to use the disk enclosure from
Lem.~\ref{lem:phase_disk_reduction}, rewrite the perturbed active mode as a feedback interconnection with a bounded complex scalar uncertainty, and then apply the bounded real lemma.

\begin{proposition}
\label{prop:lmi_phase_bound}
Let \(A_\phi\) be defined by \eqref{eq:A_phi}. Suppose there exists
\(P=P^\top\succ 0\) such that
\begin{equation}
\begin{bmatrix}
A_\phi^\top P+P A_\phi & P U & r_\phi V \\
U^\top P & -1 & 0 \\
r_\phi V^\top & 0 & -1
\end{bmatrix}
\prec 0.
\label{eq:lmi_phase}
\end{equation}
Then \(\tilde A(e^{-j\phi})\) is Hurwitz for every \(|\phi|\le \bar\phi\). Consequently, the origin is GES for the closed loop~\eqref{eq:twomode} for every \(|\phi|\le \bar\phi\).
\end{proposition}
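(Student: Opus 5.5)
The plan is to prove, directly from the LMI \eqref{eq:lmi_phase}, that for every complex $\delta$ with $|\delta|\le r_\phi$ the matrix $A_\phi-\delta UV^\top$ is Hurwitz, and then to conclude with Lem.~\ref{lem:phase_disk_reduction}. Rather than citing the bounded real lemma as a black box, I will use a direct Lyapunov/S-procedure argument, so that complex perturbations are handled cleanly.

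First, take the Schur complement of \eqref{eq:lmi_phase} with respect to the lower-right block $-I_2$. This gives
\begin{equation*}
A_\phi^\top P+PA_\phi+PUU^\top P+r_\phi^2 VV^\top\prec 0 .
\end{equation*}
In particular, $A_\phi^\top P+PA_\phi\prec 0$, so $A_\phi$ is Hurwitz.

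Next, fix $\delta$ with $|\delta|\le r_\phi$ and suppose $\tilde A(\Delta)=A_\phi-\delta UV^\top$ has an eigenvalue $\lambda$ with $\Re\lambda\ge 0$ and eigenvector $v\in\mathbb C^n\setminus\{0\}$. Multiplying $\tilde A(\Delta)v=\lambda v$ on the left by $v^*P$, taking the Hermitian part and using $P\succ 0$ gives
\begin{equation*}
0\le 2\Re\lambda\, v^*Pv = v^*(A_\phi^\top P+PA_\phi)v - 2\Re\bigl(\delta\,(v^*PU)(V^\top v)\bigr).
\end{equation*}
Set $p:=U^\top P v$ and $q:=V^\top v$, both scalars. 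Young's inequality gives
\begin{equation*}
2|\delta|\,|p|\,|q|\le |p|^2+|\delta|^2|q|^2\le |p|^2+r_\phi^2|q|^2 ,
\end{equation*}
and therefore
\begin{equation*}
0\le v^*\bigl(A_\phi^\top P+PA_\phi+PUU^\top P+r_\phi^2VV^\top\bigr)v<0,
\end{equation*}
which is a contradiction. Hence $\tilde A(\Delta)$ is Hurwitz for all $|\Delta-c_\phi|\le r_\phi$. The only real care needed here is the complex conjugation bookkeeping: $P$, $U$ and $V$ are real, so $v^*PU=\overline{p}$, and the cross term is bounded by $2|\delta||p||q|$. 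I expect this to be the main, though mild, obstacle. Equivalently, one could note that the LMI is the bounded real lemma certifying $\|V^\top(sI-A_\phi)^{-1}U\|_\infty<1/r_\phi$ and then invoke the small-gain theorem together with Thm.~\ref{thm:active_loop_robustness}.

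Finally, by Lem.~\ref{lem:phase_disk_reduction}, every $e^{-j\phi}$ with $|\phi|\le\bar\phi$ lies in this disk, so $\tilde A(e^{-j\phi})$ is Hurwitz. Since $A_0$ is Hurwitz by standing assumption, the GES conclusion then follows from Prop.~\ref{prop:ges_recall}, interpreted for the perturbed active mode in the same way as in Thm.~\ref{thm:active_loop_robustness}.
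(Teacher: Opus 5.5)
Your proof is correct, but it takes a different route from the paper's. The paper reads \eqref{eq:lmi_phase} through the bounded real lemma as the frequency-domain bound $\|r_\phi M_\phi\|_\infty<1$, with $M_\phi(s)=V^\top(sI-A_\phi)^{-1}U$. It then invokes small gain: $1+\delta M_\phi(s)\neq 0$ on $\{\Re s\ge 0\}$ for $|\delta|\le r_\phi$, so $A_\phi-\delta UV^\top$ is Hurwitz. You stay in the time domain instead. The Schur complement gives the Riccati inequality, and your eigenvector test with $2|\delta||p||q|\le |p|^2+r_\phi^2|q|^2$ rules out closed right-half-plane eigenvalues directly. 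Your conjugation bookkeeping ($v^*PU=\overline{p}$, cross term $2\Re(\delta\,\overline{p}\,q)$) is right. Your version is self-contained, needs neither the bounded real lemma nor a small-gain theorem, and handles complex $\delta$ with no extra machinery. It also proves more than required: the bound holds for every $v\in\mathbb{C}^n\setminus\{0\}$, not only for eigenvectors. Hence the same $P$ satisfies $\tilde A(\Delta)^*P+P\tilde A(\Delta)\prec 0$ for every $\Delta$ in the disk, i.e., it is a common Lyapunov function for the whole disk. The paper's route buys interpretation and sharpness. The bounded real lemma is an equivalence, and small gain is also necessary for a complex scalar perturbation ranging over a full disk. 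So that route shows that feasibility of \eqref{eq:lmi_phase} is equivalent to robust stability over the enclosing disk. All conservatism of the certificate therefore comes from replacing the arc by the disk in Lem.~\ref{lem:phase_disk_reduction}, and the certificate ties into the loop/Nyquist picture used throughout the paper. The final GES step, which applies Prop.~\ref{prop:ges_recall} to the complex matrix $\tilde A(e^{-j\phi})$, is handled in the same somewhat informal way in both.
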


\begin{proof}
By Lem.~\ref{lem:phase_disk_reduction} and the definition of $\Delta$ and $\tilde A(\Delta)$, the phase-certification problem reduces to robust stability of
\[
\tilde A(\Delta)=A_\phi-\delta UV^\top,
\qquad
|\delta|\le r_\phi,
\]
with respect to a bounded complex scalar perturbation \(\delta\).
This  is the closed-loop state matrix of the feedback system
\[
\dot z=A_\phi z+Uw,
\qquad
y=V^\top z,
\qquad
w=-\delta y.
\]
The corresponding transfer function from \(w\) to \(y\) is
\[
M_\phi(s)=V^\top(sI-A_\phi)^{-1}U.
\]
Moreover, since \eqref{eq:lmi_phase} is negative definite, its upper-left
principal block satisfies
\(
A_\phi^\top P+PA_\phi\prec 0.
\)
Because \(P=P^\top\succ 0\), Lyapunov's theorem implies that \(A_\phi\) is
Hurwitz, and hence \(M_\phi\) is stable. By the bounded real lemma,
\eqref{eq:lmi_phase} is equivalent to
\(
\|r_\phi M_\phi\|_\infty<1;
\)
see, e.g., \cite[Ch.~2]{ZhouDoyleGlover1996} or
\cite[Sec.~2.7]{BoydElGhaouiFeronBalakrishnan1994}. Therefore, by small gain,
\(\tilde A(\Delta)\) is Hurwitz for every \(|\delta|\le r_\phi\), and in
particular for every \(\Delta=e^{-j\phi}\) with \(|\phi|\le \bar\phi\).
Finally, since \(A_0\) is Hurwitz by assumption and \(\tilde A(e^{-j\phi})\) is
Hurwitz for all \(|\phi|\le \bar\phi\), Prop.~\ref{prop:ges_recall} implies
that the origin is GES for the corresponding closed loop~\eqref{eq:twomode}.
\end{proof}


The previous result gives a certified phase bound for the active loop. The corresponding delay bound follows from the phase lag \(\omega\tau\) induced by a pure delay.

Define the gain-crossover set
\begin{equation}
\Omega_{\rm gc}:=\{\omega\ge 0:|L(j\omega)|=1\},
\label{eq:Omega_gc}
\end{equation}
and let
\(
\omega_{\max}:=\sup \Omega_{\rm gc},
\)
assuming \(\Omega_{\rm gc}\) is nonempty and \(\omega_{\max}<\infty\).

The next result converts the certified phase bound into a certified delay bound
for the active loop.

\begin{corollary}
\label{cor:delay_from_phase}
Let \(\omega_{\max}\) be defined by \eqref{eq:Omega_gc}, and suppose that, for some \(\bar\phi\in[0,\pi/2]\), there exists \(P=P^\top\succ 0\) satisfying \eqref{eq:lmi_phase}. Then the delayed active-loop equation
\begin{equation}
1+e^{-s\tau}L(s)=0
\label{eq:delay_loop_eq}
\end{equation}
has no solutions in \(\{\Re s\ge 0\}\) for every \(0\le \tau\le \bar\phi/\omega_{\max}\).
\end{corollary}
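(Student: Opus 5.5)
The plan is a continuity (homotopy) argument in the delay \(\tau\), anchored at \(\tau=0\). The key observation is that an imaginary-axis root of \eqref{eq:delay_loop_eq} corresponds to a constant phase shift at a gain-crossover frequency, and Prop.~\ref{prop:lmi_phase_bound} excludes every such phase shift of size at most \(\bar\phi\).

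\textbf{Anchor at \(\tau=0\).} Taking \(\phi=0\) in Prop.~\ref{prop:lmi_phase_bound} shows that \(\tilde A(1)=\tilde A\) is Hurwitz. By Thm.~\ref{thm:active_loop_robustness} with \(\Delta=1\), the equation \(1+L(s)=0\) then has no solutions in \(\{\Re s\ge 0\}\). More generally, Prop.~\ref{prop:lmi_phase_bound} combined with Thm.~\ref{thm:active_loop_robustness} shows that \(1+e^{-j\theta}L(s)=0\) has no solution with \(\Re s\ge 0\) for every \(|\theta|\le\bar\phi\). This is the fact that will produce the contradiction.

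\textbf{Homotopy in \(\tau\).} Fix \(\tau^\ast\in[0,\bar\phi/\omega_{\max}]\) and suppose, for contradiction, that \eqref{eq:delay_loop_eq} has a root with \(\Re s\ge 0\) at \(\tau=\tau^\ast\). Let \(\tau_c\in[0,\tau^\ast]\) be the infimum of the delays in this interval at which a closed right-half-plane root exists.

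\begin{itemize}
\item \emph{Confinement to a compact set.} Since \(L\) is strictly proper, \(|e^{-s\tau}L(s)|\le |L(s)|\to 0\) as \(|s|\to\infty\) with \(\Re s\ge 0\), uniformly in \(\tau\ge 0\). Hence all closed right-half-plane roots, for all \(\tau\in[0,\tau^\ast]\), lie in a fixed compact set \(\{\Re s\ge 0,\ |s|\le R\}\).
\item \emph{Continuity of roots.} The function \(F(s,\tau)=1+e^{-s\tau}L(s)\) is analytic in \(s\) on a neighborhood of this set and continuous in \(\tau\). Therefore Rouché's theorem (or Hurwitz's theorem on zeros of analytic families), applied to the number of zeros in the compact region, shows that a root cannot enter \(\{\Re s>0\}\) without first lying on the imaginary axis.
\item \emph{Existence of an axis root at \(\tau_c\).} Combining these facts with the \(\tau=0\) anchor, there is a root \(s=j\omega\), \(\omega\in\mathbb R\), of \(F(\cdot,\tau_c)\) with \(0<\tau_c\le\tau^\ast\).
\end{itemize}

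I expect this step to be the main obstacle. One must make sure roots cannot escape to or arrive from infinity (handled by strict properness) and must treat \(\tau_c\) carefully as a limit point.

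\textbf{Contradiction at the crossing.} Assume first \(\omega\ge 0\). Taking moduli in \(e^{-j\omega\tau_c}L(j\omega)=-1\) gives \(|L(j\omega)|=1\), so \(\omega\in\Omega_{\rm gc}\) and \(\omega\le\omega_{\max}\). Set \(\theta:=\omega\tau_c\). Then \(0\le\theta\le\omega_{\max}\tau^\ast\le\bar\phi\) and \(1+e^{-j\theta}L(j\omega)=0\), which contradicts the anchor statement. If \(\omega<0\), conjugate the equation: since \(L\) has real coefficients, \(\overline{L(j\omega)}=L(-j\omega)\), so \(-j\omega\) with \(-\omega>0\) is also a root and the previous case applies.

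The case \(\omega=0\) is covered automatically, since then \(\theta=0\) and the identity is \(1+L(0)=0\), already excluded at \(\tau=0\). Hence no right-half-plane root exists for any \(0\le\tau\le\bar\phi/\omega_{\max}\).
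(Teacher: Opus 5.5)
Your proposal is correct and follows the same route as the paper. An imaginary-axis root \(j\omega\) of \(1+e^{-s\tau}L(s)=0\) forces \(\omega\in\Omega_{\rm gc}\), so the delay acts as a constant phase shift \(\omega\tau\le\bar\phi\), which Prop.~\ref{prop:lmi_phase_bound} rules out via Thm.~\ref{thm:active_loop_robustness}. In addition, you make explicit what the paper's proof leaves implicit: the continuity-in-\(\tau\) argument (compactness from strict properness, Rouch\'e/Hurwitz, the anchor at \(\tau=0\)) and the conjugate-symmetry reduction to \(\omega\ge 0\).
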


\begin{proof}
For a delay factor \(e^{-s\tau}\), the phase lag at frequency \(\omega\) is
\(\omega\tau\), while the loop magnitude is unchanged. If \(\tau\) satisfies
$0\le \tau\le \frac{\bar\phi}{\omega_{\max}}$, then
\(
\omega\tau\le \bar\phi
\text{ for every }\omega\in\Omega_{\rm gc}.
\)
Thus the delay contributes no more phase lag than the certified phase bound at
any gain-crossover frequency. By the delay boundary condition associated with \eqref{eq:delay_loop_eq}, this
excludes imaginary-axis crossings of the delayed active loop. Hence the delayed active-loop equation has no solutions in \(\{\Re s\ge 0\}\)
for all such \(\tau\).
\end{proof}

Prop.~\ref{prop:lmi_phase_bound} and Cor.~\ref{cor:delay_from_phase}
provide conservative but tractable certified phase- and delay-robustness
bounds for the active loop. These conditions can also, in principle, be incorporated in controller design
in the same spirit as the gain case.

\section{Numerical Examples}

\subsection{Illustrative Planar Example}

We consider a planar illustrative example to demonstrate the certified robustness analysis and the gain-synthesis method.
The dynamics and safety constraint are given by
\[
A=\begin{bmatrix}
0 & 1\\
-2 & -0.5
\end{bmatrix}, \
B=\begin{bmatrix}
0\\
1
\end{bmatrix},
\
c=\begin{bmatrix}
0.5\\
1
\end{bmatrix},
\
d=1.
\]
We choose \(\alpha=1\), \(G=1\), and the nominal controller
\(
K=\begin{bmatrix}
0.2 & 0.2
\end{bmatrix}.
\)
For the baseline controller, the exact upper endpoint of the active-mode
gain interval, obtained from the frequency-domain characterization in
Cor.~\ref{cor:gain_interval_freq}, is
\(
\kappa_{\max}\approx 1.294,
\)
while the corresponding certified upper endpoint, computed from the LMIs in
Prop.~\ref{prop:lmi_gain_interval}, is
\(
\kappa_{\max}^{\rm cert}\approx 1.196.
\)
The exact phase and delay margins, obtained from the active-loop conditions
in \eqref{eq:phase_boundary}--\eqref{eq:delay_boundary}, are
$\phi_{\rm PM}\approx 33.8^\circ$, $\tau_{\rm DM}\approx 1.50~\text{s}$.
whereas the certified phase and delay bounds, computed from
Prop.~\ref{prop:lmi_phase_bound} and
Cor.~\ref{cor:delay_from_phase}, are
$\phi_{\rm PM}^{\rm cert}\approx 18.4^\circ$, $\tau_{\rm DM}^{\rm cert}\approx 0.159~\text{s}$.
%
%
For reference, the inactive mode \(A_0\) is much more robust: it remains stable for all positive uniform gain scalings and has no finite phase-margin limit, so the robustness bottleneck here is the active mode.


We then apply the gain-synthesis method of
Prop.~\ref{prop:gain_interval_synthesis} to enforce the desired
certified gain interval
\(
[1,\bar\kappa]=[1,2].
\)
The resulting controller is
\(
K_{\mathrm{syn}}=
\begin{bmatrix}
-1.4030 & -0.2230
\end{bmatrix},
\)
for which the exact and certified gain endpoints become $\kappa_{\max}\approx 6.155$, $\kappa_{\max}^{\rm cert}\approx 6.154$.
The exact phase and delay margins, again computed from
\eqref{eq:phase_boundary}--\eqref{eq:delay_boundary}, improve to $\phi_{\rm PM}\approx 90.8^\circ$, $\tau_{\rm DM}\approx 1.18~\text{s}$.
while the certified phase and delay bounds, obtained from
Prop.~\ref{prop:lmi_phase_bound} and
Cor.~\ref{cor:delay_from_phase}, become $\phi_{\rm PM}^{\rm cert}\approx 50.7^\circ$, $\tau_{\rm DM}^{\rm cert}\approx 0.562~\text{s}$.
%
%

Fig.~\ref{fig:quadrotor_gain_interval} compares the exact and certified gain intervals before and after synthesis. Fig.~\ref{fig:quadrotor_bode} shows the corresponding active-loop Bode plots and phase/delay margins. Fig.~\ref{fig:quadrotor_phase_portraits} shows the resulting state trajectories and confirms convergence in both cases.


\begin{figure}[t]
    \centering
    \includegraphics[width=.65\linewidth]{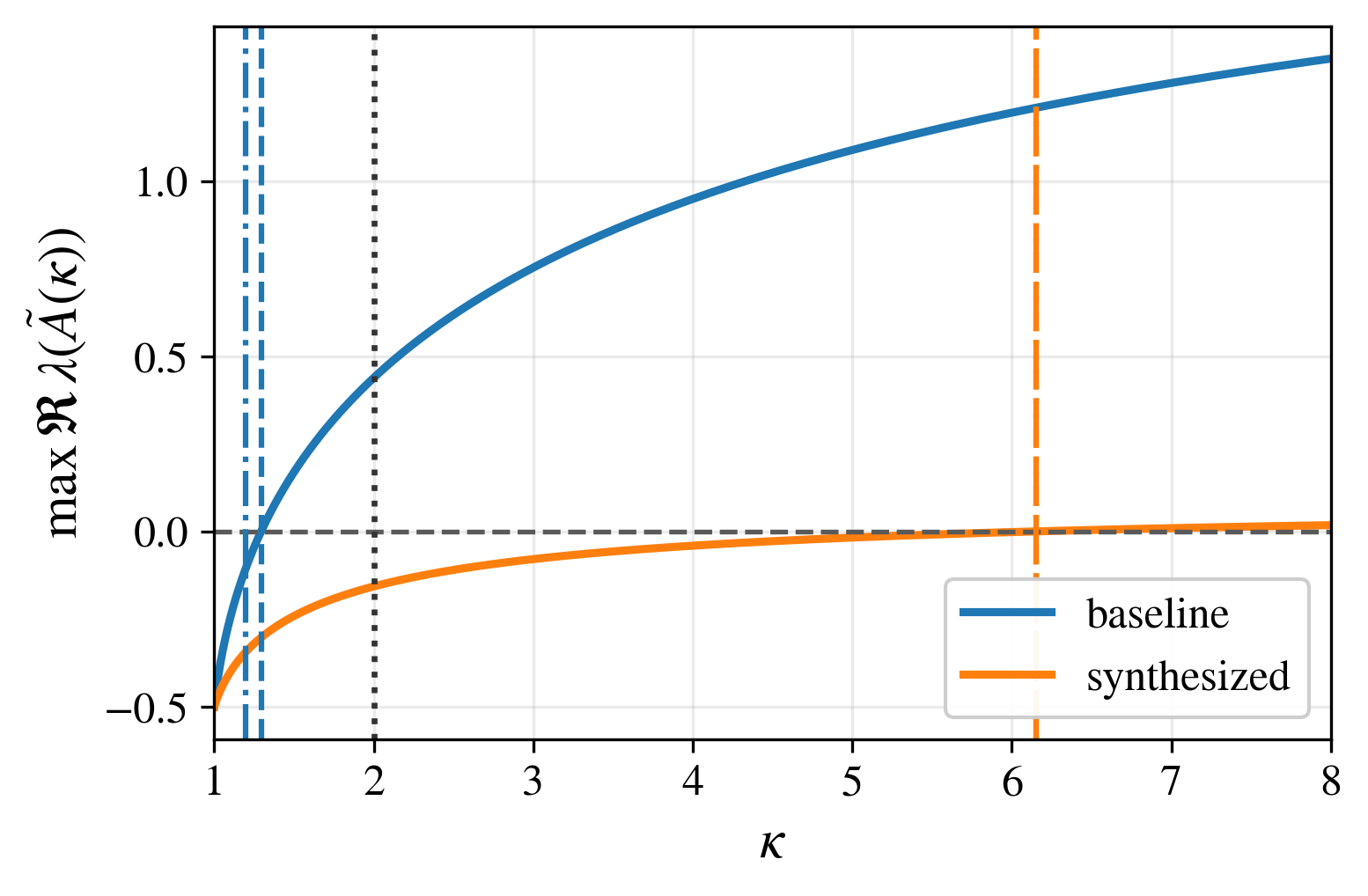}
    \caption{\footnotesize
    Spectral abscissa of \(\tilde A(\kappa)\) for the baseline and synthesized
    controllers. Dashed and dash-dotted lines denote exact and certified upper
    gain endpoints, respectively, and the dotted line marks the target
    \(\bar\kappa=2\).
    }
   
    \label{fig:quadrotor_gain_interval}
\end{figure}

\begin{figure}[t]
    \centering
    \includegraphics[width=.75\linewidth]{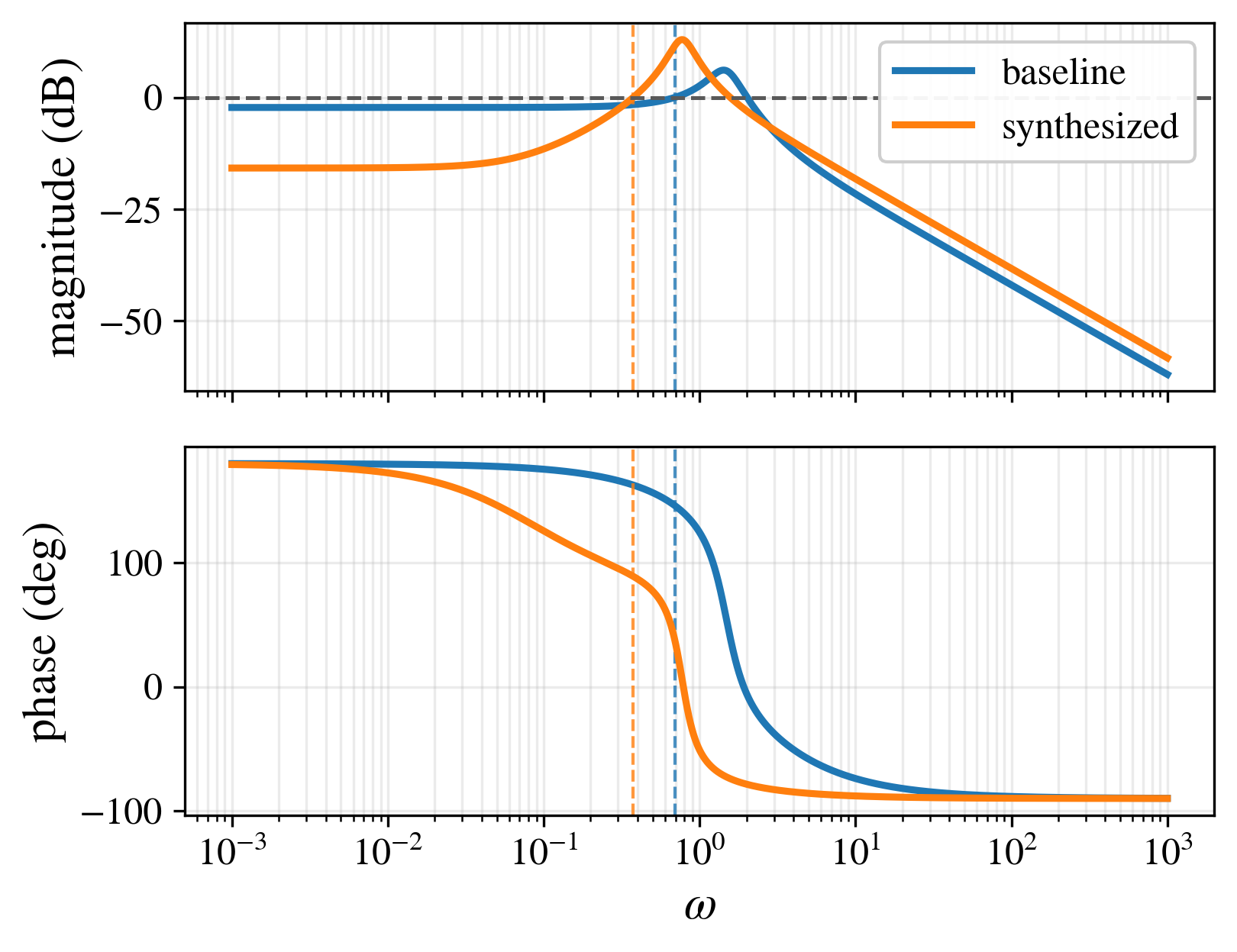}
    \caption{\footnotesize
    Active-loop Bode plots for the baseline and synthesized controllers.
    Dashed vertical lines mark the gain-crossover frequencies determining the
    minimum exact phase margin.
    }
    \label{fig:quadrotor_bode}
\end{figure}

\begin{figure}[t]
    \centering
    \includegraphics[width=\linewidth]{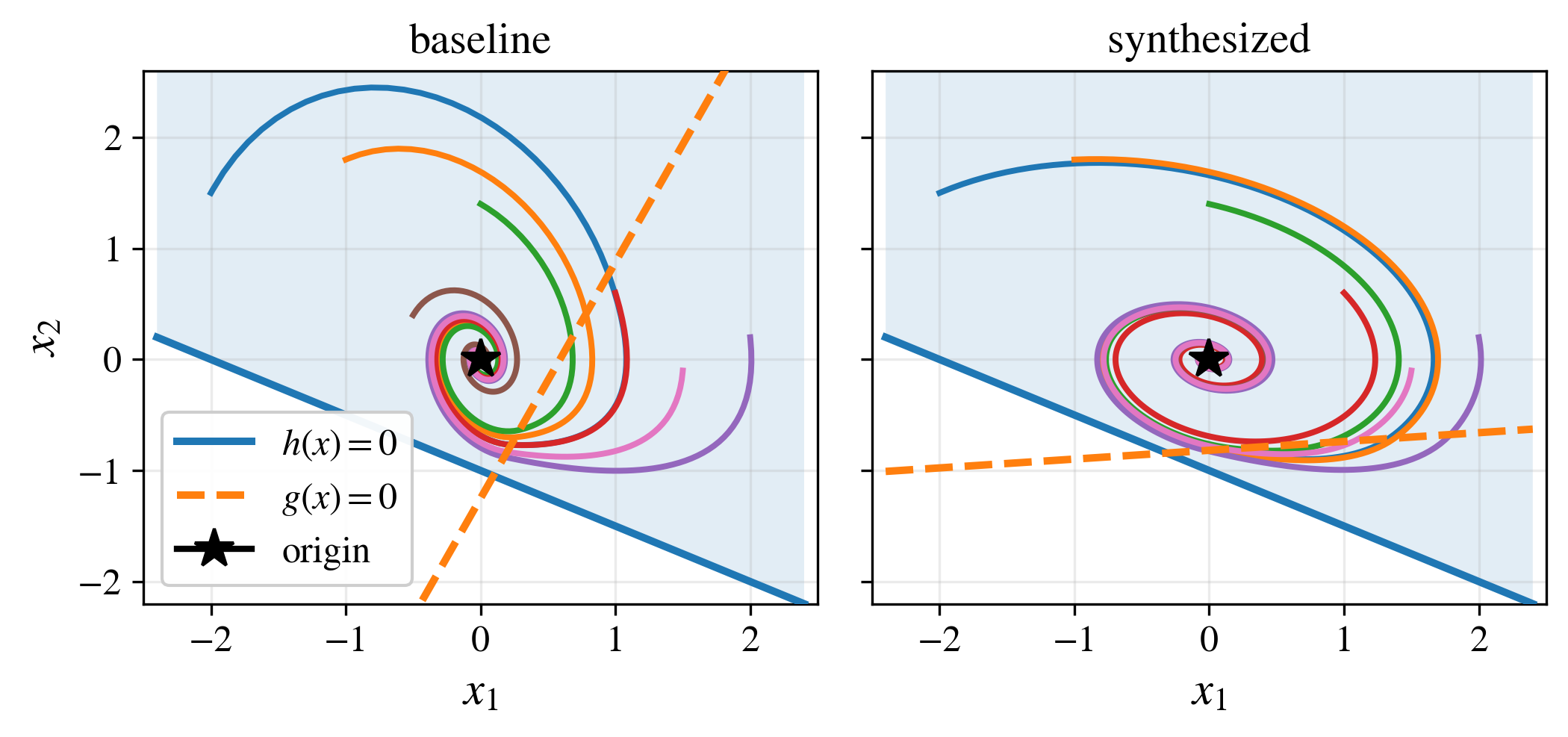}
    \caption{\footnotesize
    Phase portraits for the baseline and synthesized controllers. The solid and
    dashed lines denote \(h(x)=0\) and \(g(x)=0\), respectively.
    }
    \label{fig:quadrotor_phase_portraits}
\end{figure}

\subsection{Aircraft Roll--Yaw Example}

We next consider the roll--yaw dynamics of a mid-size aircraft around an
operating point from \cite[Sec.~14.8]{EL-KAW:24}. The plant state is
\(x_p=[\beta,p_s,r_s]^\top\), where \(\beta\) is the sideslip angle and
\(p_s,r_s\) are the roll and yaw rates. Using an integrated tracking-error state \(e_{yI}\), the augmented dynamics are
\begin{equation}
\begin{bmatrix}
\dot e_{yI}\\
\dot x_p
\end{bmatrix}
=
A
\begin{bmatrix}
e_{yI}\\
x_p
\end{bmatrix}
+
B u
+
\begin{bmatrix}
-y_{\rm cmd}\\
0
\end{bmatrix},
\label{eq:aircraft_augmented}
\end{equation}
with
\[
A=
\begin{bmatrix}
-\kappa & C_p\\
0 & A_p
\end{bmatrix},
\qquad
B=
\begin{bmatrix}
0\\
B_p
\end{bmatrix},
\qquad
C_p=[\,0\ \ 1\ \ 0\,],
\]
where \(\kappa=0.5\), and \(A_p,B_p\) are taken from \cite{EL-KAW:24}. We use
an LQR nominal controller with
\[
Q=\mathrm{diag}(10.25,0,0,16.02),
\qquad
R=\mathrm{diag}(1,0.49).
\]

Safety is imposed through the roll-rate constraint \(p_s\le 0.4\), which yields
the affine barrier function \(h(x)=c^\top x+d\) with
\(
c=\begin{bmatrix}0&0&-1&0\end{bmatrix}^\top,
\,
d=0.4.
\)
For the baseline controller, the exact upper endpoint of the active-mode gain
interval, computed from Cor.~\ref{cor:gain_interval_freq}, is
\(
\kappa_{\max}\approx 1.018.
\)
Moreover, the interval \([0.1,1.015]\) is certified by the LMI condition in
Prop.~\ref{prop:lmi_gain_interval}. The corresponding exact phase and
delay margins, obtained from \eqref{eq:phase_boundary}--\eqref{eq:delay_boundary},
are $\phi_{\rm PM}\approx 2.88^\circ$, $\tau_{\rm DM}\approx 1.09~\text{s}$.
%
%
For reference, the inactive mode \(A_0\) is stable for all positive uniform gain scalings and has phase margin approximately \(73^\circ\), so the robustness bottleneck here is the active mode.


Applying Prop.~\ref{prop:gain_interval_synthesis} to the target certified interval \([0.1,2]\) yields the gain
\[
K=
\begin{bmatrix}
-0.0040 & 0.1795 & 0.0521 & -0.1151\\
0.0002 & -2.0770 & -0.0130 & -0.2588
\end{bmatrix}.
\]
For this synthesized controller, the exact upper gain endpoint increases to
\(
\kappa_{\max}\approx 16.58.
\)
 
Fig.~\ref{fig:aircraft_gain_interval} shows the spectral abscissa of the
active-mode matrix \(\tilde A(\kappa)\) before and after synthesis.
Fig.~\ref{fig:aircraft_bode} shows the corresponding active-loop Bode plots.
Fig.~\ref{fig:aircraft_time_response} compares representative closed-loop
responses and confirms safe evolution under the filtered dynamics.

\begin{figure}[t]
    \centering
    \includegraphics[width=\linewidth]{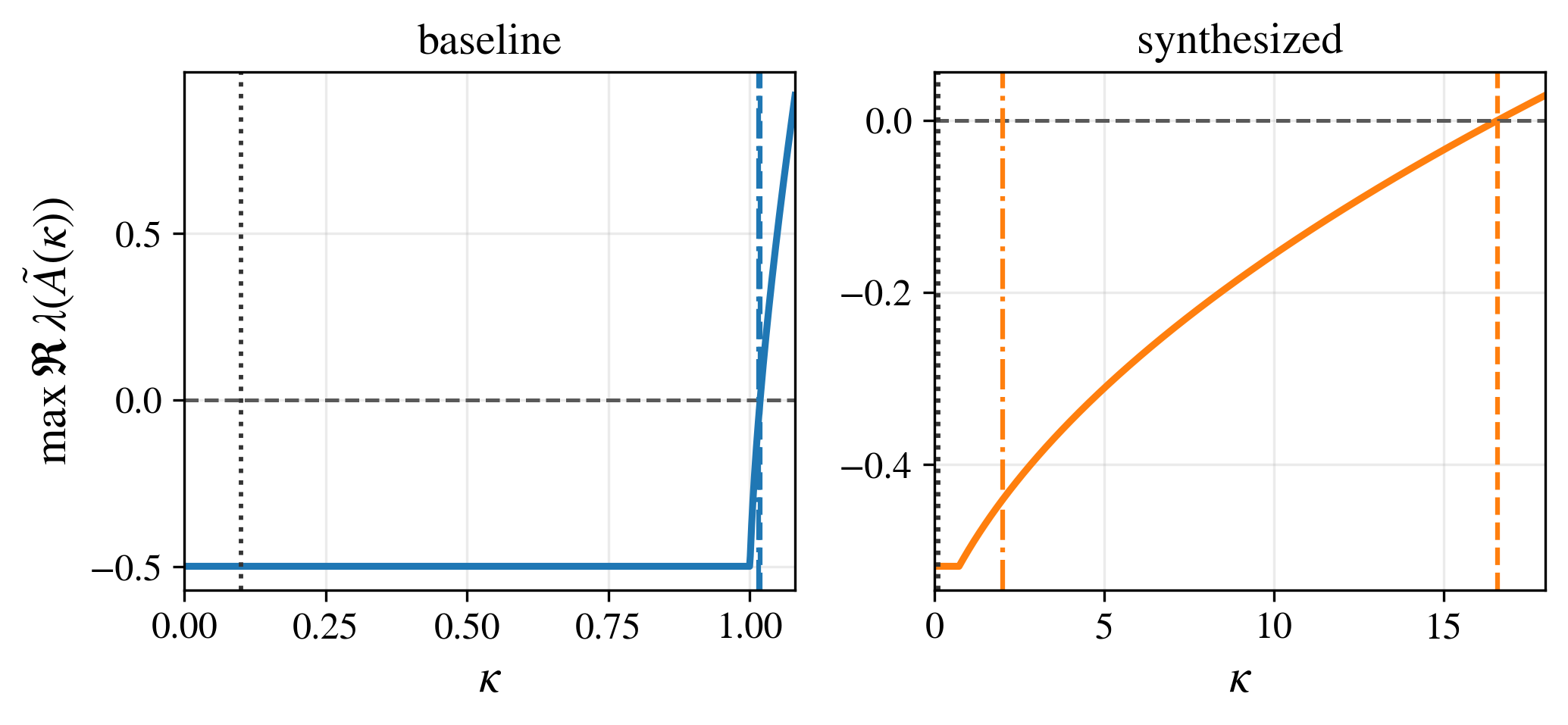}
    \caption{\footnotesize
    Spectral abscissa of \(\tilde A(\kappa)\) for the baseline and synthesized
    controllers. Dashed lines denote exact upper gain endpoints, and dotted
    lines denote the certified design targets.
    }
    \label{fig:aircraft_gain_interval}
\end{figure}

\begin{figure}[t]
    \centering
    \includegraphics[width=.75\linewidth]{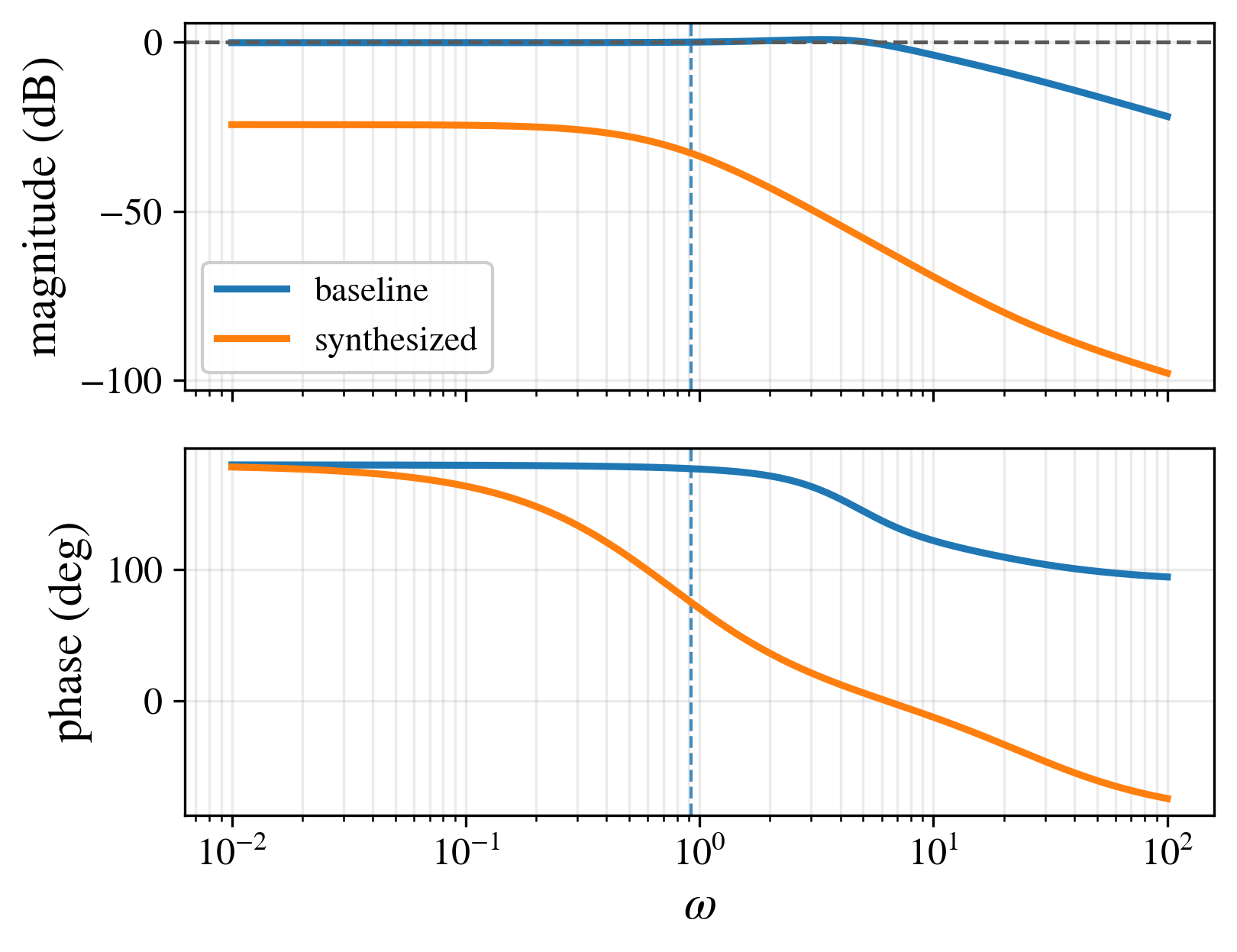}
    \caption{\footnotesize
    Active-loop Bode plots for the aircraft example. The baseline controller
    has a very small phase margin, while the synthesized controller yields a
    substantially more robust active loop.
    }
    \label{fig:aircraft_bode}
\end{figure}

\begin{figure}[t]
    \centering
    \includegraphics[width=\linewidth]{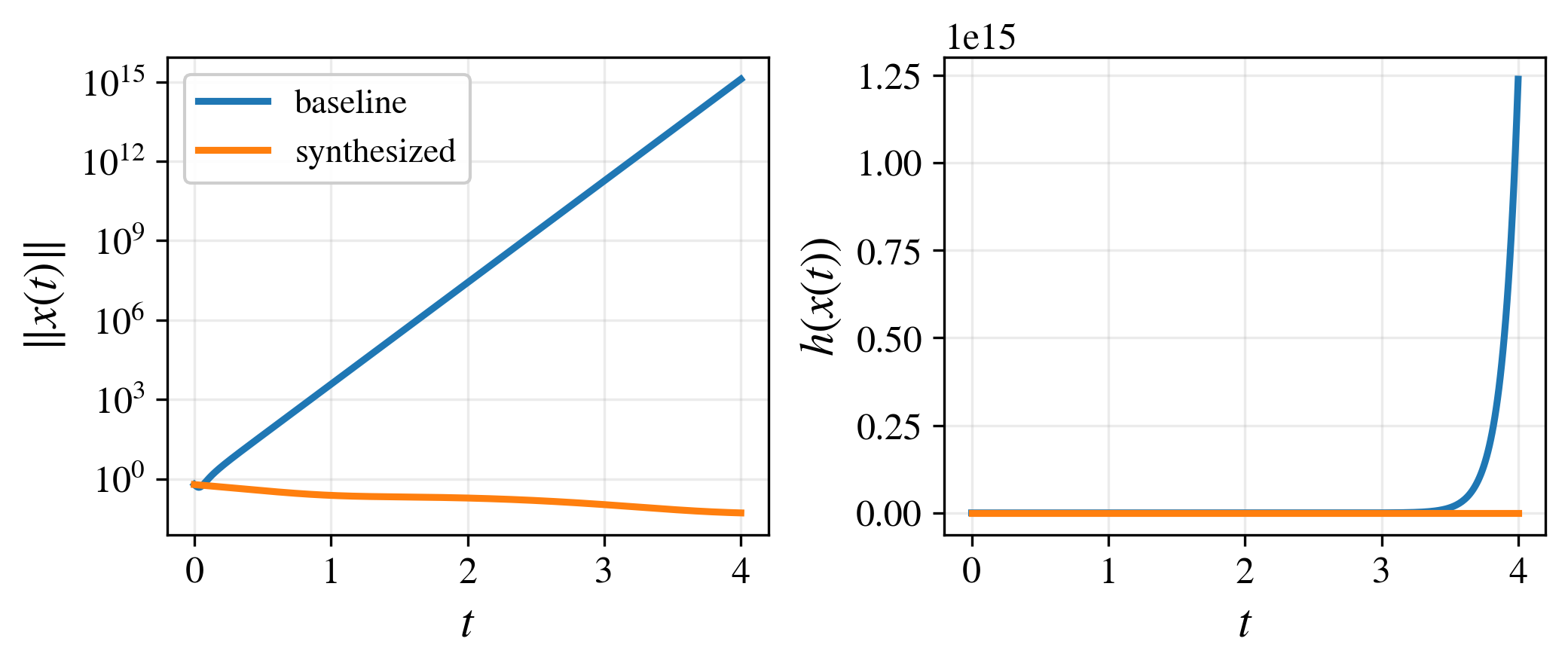}
    \caption{\footnotesize
    Representative safety-filtered closed-loop responses for the aircraft
    example, showing state evolution and barrier values for the baseline and
    synthesized controllers.
    }
    \label{fig:aircraft_time_response}
\end{figure}

\section{Conclusion}\label{sec:conclusion}

This paper studied certified robustness of closed-loop stability for CBF-QP safety filters with an affine safety constraint. The active mode was shown to admit an exact scalar loop representation, enabling gain, phase, and delay margin analysis from a classical robust-control viewpoint. This yielded exact stability-margin characterizations, as well as LMI-based certificates and synthesis conditions for prescribed robustness guarantees. Numerical examples illustrated the results.


\bibliography{bib/alias,bib/Main-add,bib/Main,bib/JC,bib/New,bib/PM, references}
\bibliographystyle{IEEEtran}

\end{document}